%% file: main.tex
\documentclass[a4paper, amsfonts, amssymb, amsmath, reprint, showkeys, nofootinbib, twoside, superscriptaddress, prx]{revtex4-1}
\input{preamble}

\begin{document}
\title{Qubit Loss Inference with Stabilizer Codes without Leakage Detection Units
}
\author{Shin Nishio}
    \email[Correspondence email address: ]{parton@sfc.wide.ad.jp}
    \affiliation{University College London, Gower St, London WC1E 6BT, United Kingdom}
    \affiliation{Keio University, Yokohama, Kanagawa 223-8522, Japan}
\author{Takeaki Uno}
    \affiliation{National Institute of Informatics, 2-1-2 Hitotsubashi, Chiyoda-ku, Tokyo 101-8430, Japan}
\author{Fumiyoshi Kobayashi}
    \affiliation{Mercari, Inc., R4D, Tokyo 106-6118, Japan}
\author{Takahiko Satoh}
    \affiliation{Keio University, Yokohama, Kanagawa 223-8522, Japan}
\author{Dan E. Browne}
    \affiliation{University College London, Gower St, London WC1E 6BT, United Kingdom}

\begin{abstract}
Qubit loss occurs when the physical carrier of a qubit leaves the computational system without directly revealing the event's location. Such errors are a major obstacle to fault-tolerant quantum computation on platforms including photonic, neutral-atom, and trapped-ion systems. Loss locations are commonly identified using additional hardware operations such as leakage-detection units (LDUs), which introduce space-time overhead and may themselves become a source of error.

We investigate whether qubit loss on stabilizer codes can instead be inferred from syndrome data obtained through standard repeated stabilizer measurements. Under a non-entangling model for gates involving a lost qubit, we derive a sufficient condition for loss detectability in general stabilizer codes. The condition is based on the emergence of anticommutation between stabilizer checks after their support on the lost qubits is removed. 
By using that condition, 
we formulate the exact loss-inference problem using the observed set of non-deterministic checks together with its maximum-likelihood formulation. We then relax the problem to the minimum set cover problem with a greedy heuristic algorithm.
We evaluate the resulting inference and loss-correction protocols on the rotated surface code via circuit-level noise simulations for trapped-ion and neutral-atom platforms. On both platforms, inference-based and adaptive protocols reduce the logical error rate relative to a noisy-LDU baseline in the low-to-moderate loss-rate regime relevant to near-term hardware, while requiring fewer space-time overheads.
\end{abstract}

\keywords{Qubit Loss Errors, Quantum Error Correcting Code, Loss Inference, Leakage Detection Unit}

\maketitle
\section{Introduction}

Quantum systems are subject to various kinds of noise, and \textit{loss and leakage errors} can be particularly damaging~\cite{google2025quantum, bluvstein2026fault}. A leakage error takes the physical qubit out of the qubit subspace, while a loss error is a special kind of leakage that removes the qubit altogether. 
\emph{Loss errors} do not merely affect the codewords. They also effectively reduce the effective code distance~\cite{gu2025optimizing, chang2025surface}, making long-time quantum computation impossible without an enormous amount of costly reloading.
We focus on qubit loss, a form of leakage in which the physical carrier becomes unavailable. When the location of such an event is known, it can be treated as an erasure. Here we ask how much location information can instead be inferred directly from syndrome data without dedicated detection hardware.

In general, when a leakage error occurs, one converts it into an erasure using a quantum circuit called a \emph{leakage detection unit} (LDU)~\cite{gottesman1997stabilizer, chow2024circuit, perrin2025quantum}. However, because an LDU circuit needs to be executed in addition to the standard syndrome measurements, it can lead to a longer code cycle, demand more auxiliary qubits, and become a new source of error.

In this work, we formalize the condition for detectable loss errors using the statistics of multiple rounds of stabilizer measurements alone, and we formulate the inference problem. 
Fig.~\ref{fig:loss_flow} shows a flow of loss correction with the leakage detection unit or the loss inference.
Table~\ref{tab:ancilla_comparison} summarizes the auxiliary-qubit overhead required by representative approaches. Conventional circuit-LDU requires $n$ additional auxiliary qubits dedicated to leakage detection, while our inference-based approach eliminates these auxiliaries entirely at the expense of additional reloading caused by wrong inference.
 
Our paper is organized as follows: Sec.~\ref{sec:preliminary} gives a brief overview of the loss error and the standard hardware-based countermeasure to it, LDU. Then Sec.~\ref{sec:assumption} gives an assumption on the behavior of the unitary gates on the lost qubit, where CX gates on a lost qubit end up with a no-entangling gate followed by a Pauli channel on a surviving qubit. This assumption is natural in some physical systems.
Based on that assumption, we give the condition on the detectable loss errors for arbitrary stabilizer codes in Sec.~\ref{sec:condition}. Sec.~\ref{sec:problem} defines the maximum-likelihood loss inference problem for general stabilizer codes. We then give a relaxed problem as the minimum set cover problem in Sec.~\ref{sec:min_set_cov}. The set cover problem has a greedy-based heuristic algorithm with a bounded approximation ratio, which we describe in Sec.~\ref{sec:heuristic}. In Sec.~\ref{sec:eval}, we evaluate the loss correction protocol on rotated surface codes with loss inference from the perspective of the accuracy of the inference and also the logical error rate. 
We finally summarize and discuss the practical flow for a loss-tolerant system in Sec.~\ref{sec:conclusion}.

\begin{figure*}[htbp]
    \centering
    \includegraphics[width=0.9\linewidth]{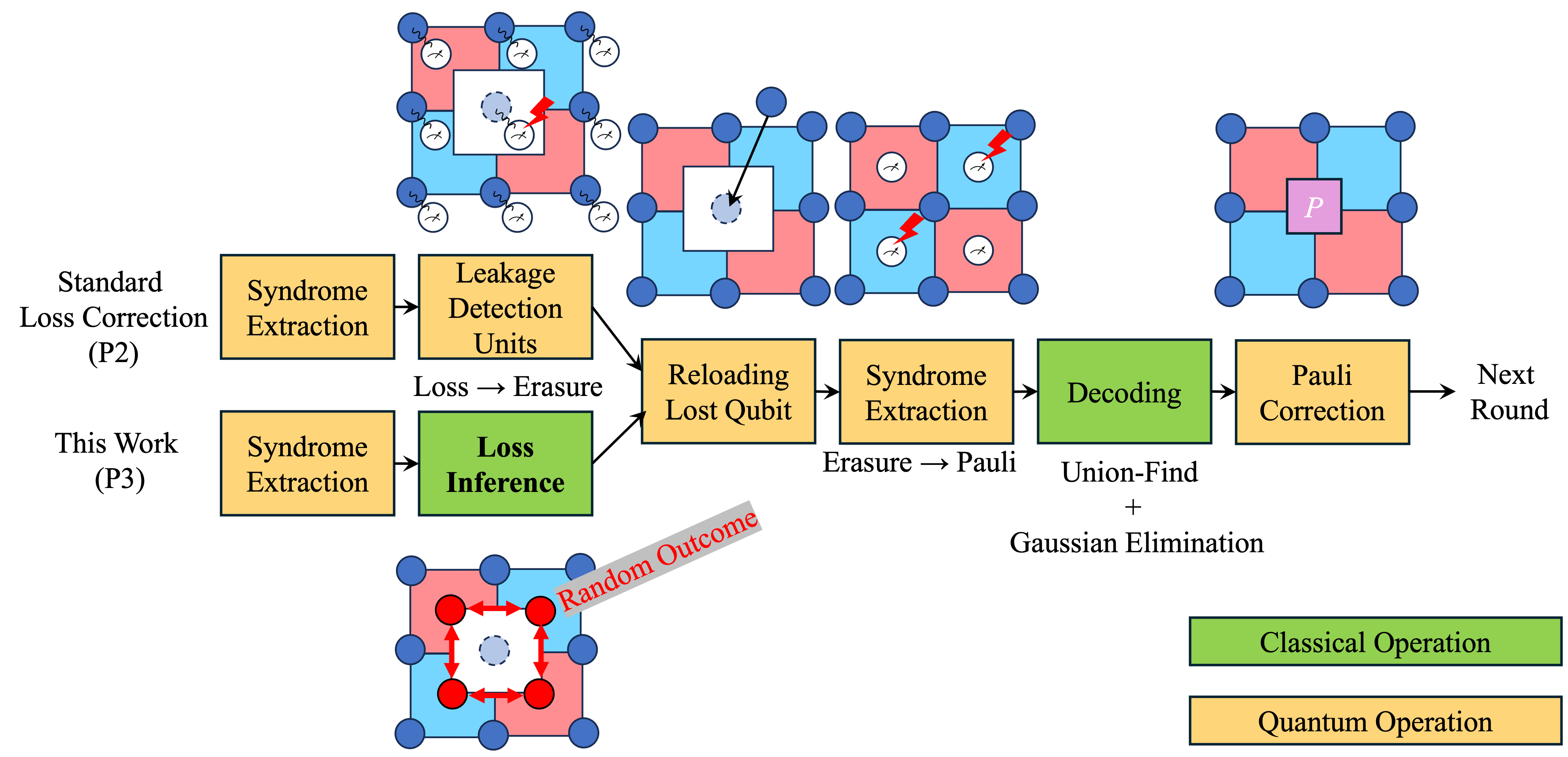}
    \caption{Loss correction protocol flows. 
    The standard protocol to deal with loss errors is converting the loss error into an erasure at a known location via leakage detection units.
    Instead of leakage detection units, we propose a protocol with loss inference, which no longer requires running an additional circuit, which reduces overhead and eliminates a source of noise. After applying one of them, one can apply reloading, syndrome extraction, decoding, and Pauli error correction, or Pauli frame update. Note that surface code is used as an example, but both protocols can be easily generalized to arbitrary stabilizer codes by replacing the decoder.}
    \label{fig:loss_flow}
\end{figure*}

\begin{table}[htbp]
\centering
\caption{Number of auxiliary qubits required by each leakage detection method per round for a $[\![n,k,d]\!]$ code with $l$ loss errors. FP and FN stand for the number of false positive and false negative results from the loss inference algorithm.}
\begin{tabular}{lccc|c}
\hline
 & Stab. Check & LDU & Reloading & Sum \\
\cline{1-5}
Circuit LDU
& $n-k$ & $n$ & $l$ & $2n-k+l$ \\
SWAP LDU
& $n-k$ & $n$ & -- & $2n-k$ \\
Loss Inference
& $n-k$ & -- & $l+\mathrm{FP}-\mathrm{FN}$
& 
\begin{tabular}{c}
$n-k+l$\\
$+\mathrm{FP}-\mathrm{FN}$ \\
\end{tabular}\\ \hline\hline
\end{tabular}
\label{tab:ancilla_comparison}
\end{table}

\section{Preliminaries}
\label{sec:preliminary}
In this work, we deal with loss errors on data qubits in stabilizer codes. 
Before discussing how such errors may be detected, we first clarify related terms that are sometimes used interchangeably in the literature.

\subsection{Leakage Error and Terminologies}
\label{subsec:terminology}
A \emph{leakage error} occurs when the state of a physical qubit leaves the designated computational subspace and populates states outside that subspace~\cite{wood2018quantification}. Equivalently, writing the full physical Hilbert space as \(\mathcal{H}_{\mathrm{comp}}\oplus \mathcal{H}_{\mathrm{leak}}\), a leakage channel is any channel that transfers population from the former to the latter. Leakage is therefore more general than ordinary in-subspace noise: once leakage occurs, subsequent gates and measurements need not act according to the intended qubit model. Such errors arise in several platforms, including superconducting qubits~\cite{wood2018quantification}, spontaneous scattering or decay in trapped ions~\cite{kang2023quantum}, and decay from metastable qubit or transitions induced by blackbody radiation of Rydberg levels on neutral atom systems~\cite{wu2022erasure, scholl2023erasure}.

A \emph{loss error} is a more specific situation in which the physical carrier of the qubit becomes unavailable to subsequent operations.
Examples include atom loss from an optical tweezer~\cite{vala2005quantum, kobayashi2025erasure} or photon loss in photonic encodings~\cite{wasilewski2007protecting}. 
In this sense, loss can be regarded as a particular form of leakage that removes the qubit from the effective computational register altogether. 
Throughout this paper, our focus is on such loss errors on data qubits. 

An \emph{erasure error} is an error whose location is known (heralded)~\cite{grassl1997codes}. 
More precisely, in the quantum-erasure model, one allows an arbitrary error on a known qubit location, which makes erasures substantially easier to handle than errors at unknown locations. 
This distinction is fundamental in quantum coding theory: for example, the quantum erasure channel admits shorter codes than the corresponding unknown-location error model~\cite{grassl1997codes}. 
In many fault-tolerant architectures, a detected loss or detected leakage event is therefore treated as an erasure. 
A closely related static setting arises in superconductivity architectures with known hardware \emph{fabrication errors}, also known as \emph{dropouts}, such as missing or unusable qubits and couplers due to imperfect device yield, where the locations of the defects are known and the code or syndrome-extraction circuit can be adapted accordingly~\cite{auger2017fault, debroy2025luci}.
There have been efforts to convert various types of unknown position errors into erasure errors physically~\cite{wu2022erasure, kang2023quantum}, and it is known that this can reduce the computational complexity of the decoding problem~\cite{delfosse2020linear}, hence relaxing the required responsiveness of online decoders.

A \emph{deletion error} refers to the removal of a subsystem together with the loss of its positional alignment, so that the indices of the remaining subsystems become ambiguous~\cite{leahy2019quantum, nakayama2020single, hagiwara2025introduction}. This is conceptually distinct from qubit loss, which concerns the physical absence of a carrier without necessarily invoking such synchronization ambiguity. We mention deletion only to avoid this terminological confusion, since our focus is on qubit loss.

With this terminology, the central question of this paper can be stated succinctly: when loss occurs without dedicated hardware that explicitly flags its location, under what conditions can stabilizer measurements still reveal where the loss has happened? 
The following subsections review the standard hardware-based approach to this question and explain why it is useful to study alternatives that do not rely on it.

\subsection{Standard Hardware-based Approaches}
\label{subsec:ldu}
The standard hardware-based approach to loss and leakage is to convert them into erasures by explicitly identifying the affected qubit location during fault-tolerant operation. In practice, this is achieved either through direct measurement primitives that distinguish computational states from loss, or through auxiliary circuit gadgets that detect or remove leakage before subsequent syndrome-extraction rounds.

One representative strategy is to use measurement operations with a richer outcome alphabet, for example, measurements that distinguish \(\{0,1,\mathrm{Loss}\}\) rather than only the computational basis outcomes. An example of this measurement is the three-state readout in a superconducting qubit~\cite{chen2016measuring} and state-selective readout (SSR)~\cite{PhysRevA.108.03240, PRXQuantum.4.030337} for the neutral atom qubit as discussed in the Appendix~\ref{app:atom-SSR}. When available, such measurements directly reveal the location of the loss event and allow the decoder to treat it as an erasure~\cite{stricker2020experimental, de2025mid, baranes2026leveraging}. 
Varbanov et al. inferred the time and location of leakage in a transmon-based surface code from neighboring stabilizer syndrome and analog three-state readout~\cite{varbanov2020leakage}. 
SSR is particularly useful for detecting auxiliary-qubit loss and is complementary to our proposed method for inferring data-qubit loss.

A complementary strategy is to use dedicated leakage-handling gadgets. A leakage-reduction unit (LRU) maps an input suffering possible leakage back into the computational subspace, at the cost of at most an ordinary in-subspace fault; in the framework of Aliferis and Terhal, quantum teleportation provides a natural universal implementation of such an LRU~\cite{aliferis2005fault}.
Syndrome information has also been exploited to identify leakage. Recent approaches infer or predict leakage locations from characteristic syndrome patterns to reduce the required number of LRUs~\cite{vittal2023eraser, mude2025accurate}. These works employ a phenomenological leakage model in which leakage either propagates or induces effective Pauli errors during two-qubit gates.

More directly relevant to loss detection are leakage-detection units (LDUs)~\cite{gottesman1997stabilizer, chow2024circuit, perrin2025quantum}, which aim to flag the affected location so that the event can be handled as an erasure in the subsequent decoding stage. Variants such as SWAP-LDU further combine detection with replacement or reloading of the affected subsystem~\cite{chow2024circuit, liu2026achieving}.

\subsection{Limitations and Costs of LDUs}
\label{subsec:cost}
Despite their advantages, LDUs and related hardware-assisted schemes come with nontrivial costs. First, they generally increase circuit depth or code-cycle duration, since dedicated detection, replacement, or reinitialization operations must be inserted into the standard syndrome-extraction schedule. Second, they may require additional auxiliary qubits, measurement capabilities, or reload primitives that are not part of the baseline architecture. This overhead is not merely static: in architectures with qubit replacement or atom reloading, a larger auxiliary footprint also increases the number of subsystems that must be provisioned, monitored, and, when necessary, replenished during repeated syndrome-extraction cycles. In neutral-atom platforms, this burden can be especially significant because atom replacement has historically been a practical bottleneck for deep circuits, and avoiding full-array reloads is itself a major architectural concern~\cite{li2025fast,litteken2022reducing}.
Third, the added operations themselves introduce new opportunities for faults, so the mechanism intended to mitigate leakage or loss can become an additional source of error.

These costs motivate the present perspective. Rather than assuming a dedicated leakage-detection mechanism, we ask to what extent loss locations can be inferred directly from the statistics of repeated stabilizer measurements. This is the sense in which our approach is LDU-free. To study syndrome-based inference, we next introduce the gate model assumed throughout this work.

While this manuscript was in preparation, Wang et al. independently proposed a spatiotemporal graph-neural-network decoder that infers qubit loss from repeated stabilizer measurements designed for surface codes~\cite{wang2026ai}. In contrast, our work derives an algebraic sufficient condition for loss detectability and formulates the resulting inference problem for general stabilizer codes.

\section{Assumption of the System}
\label{sec:assumption}
In this work, we consider loss errors on the data qubits of a stabilizer code. 
The action of a unitary gate on a qubit that has undergone a loss error is, in general, no longer well defined on the original computational Hilbert space.

We assume the following behavior of gates acting on lost qubits as it behaves in some physical systems, which will be used later in Sec.~\ref{sec:condition}.
After a loss event, we do not track the state of the lost qubit. Instead, we model the effect on the surviving qubit as follows.

\begin{assumption}[Non-entangling assumption]
Suppose the initial joint state is separable and given by $\ket{\psi}_L \otimes \ket{\phi}_S,$
where qubit $L$ has been lost and qubit $S$ survives without error.
When a CX gate is applied between $S$ and $L$, the resulting operation is assumed to be non-entangling. The effect on the surviving qubit $S$ is modeled as a Pauli channel,
\[
\mathcal{E}_S(\rho)=\sum_{P\in\{I,X,Y,Z\}} p_P\, P\rho P,
\qquad \sum_P p_P=1.
\]
\label{assumption_noise}
\end{assumption}

Under Assumption~\ref{assumption_noise}, the failed syndrome extraction circuit for generator $s_i$ implements a measurement of a punctured operator $s'_i$, up to a sign and/or a local Pauli frame on the surviving subsystem.

For example, Fig.~\ref{fig:extraction_circuit}(a) shows the syndrome extraction circuit for $Z_1Z_2Z_3Z_4$ stabilizer operator without any error and Fig.~\ref{fig:extraction_circuit}(b) shows the circuit for the case having a loss error on the third qubit. It failed to measure the original operator and is now measuring the effective punctured operator which can be regarded as a gauge operator~\cite{poulin2005stabilizer}, which is not necessary in the stabilizer group. In this case, the operator is $\pm Z_1 Z_2 Z_4$ up to the sign, which is determined by the Pauli error. 

We have shown that this assumption is satisfied for the ideal loss models of the M\o lmer--S\o rensen gate on trapped ions in Appendix~\ref{app:MS} and Rydberg-blockade-based CZ gate on neutral atoms in Appendix~\ref{app:blockade-CZ}.
We also remark that this effective no-entangling assumption has been widely discussed in recent works on loss correction on photonic system~\cite{ralph2015photon}, neutral atoms~\cite{perrin2025quantum, liu2026achieving}(no Pauli channel following the loss), and trapped ions~\cite{brown2019leakage} (Pauli $Z(-\pi / 2)$ / $X(-\pi / 2$) error following the CX gate occurs on the survived qubit depends on whether it is control/target qubit).

\begin{figure}[htbp]
\subfloat[]{%
  \includegraphics[clip,width=0.47\columnwidth]{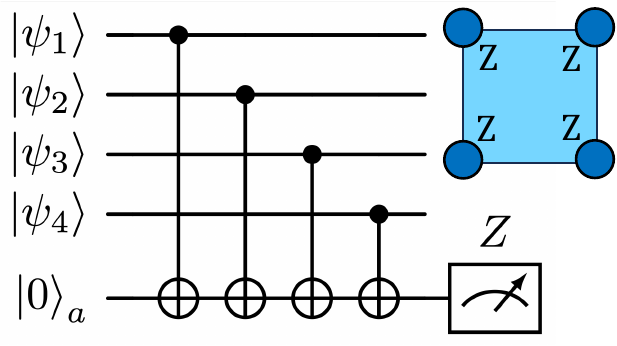}
}
\subfloat[]{%
  \includegraphics[clip,width=0.52\columnwidth]{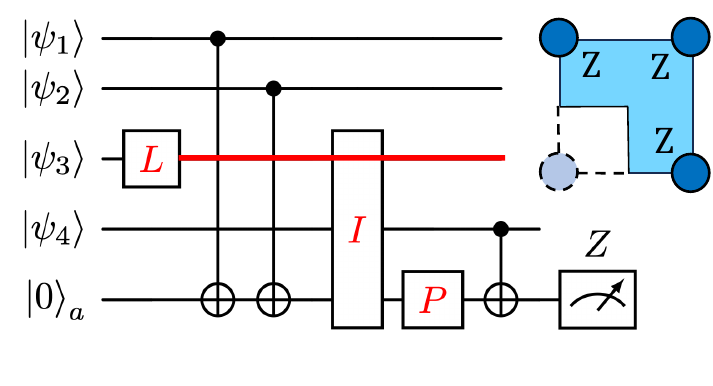}
}
\caption{
(a)  A quantum circuit for stabilizer measurement on $Z_1 Z_2 Z_3 Z_4$ operator.
(b) A quantum circuit intended to execute stabilizer measurement for $Z_1 Z_2 Z_3 Z_4$ operators on a punctured system, which resulted in measuring an effective gauge operator $\pm Z_1 Z_2 Z_4$ where the sign is determined by the Pauli error operator $P$. }
\label{fig:extraction_circuit}
\end{figure}

In addition to the gate model, we also specify the temporal model of loss events during repeated syndrome extraction. Throughout this work, we assume that loss events occur between syndrome-extraction rounds with a fixed CX gate order. Consequently, the set of lost qubits remains fixed throughout each syndrome-extraction circuit, and each stabilizer measurement acts on a fixed punctured code during that round.

Assumption~\ref{assumption_noise} also modifies the propagation of Pauli faults on auxiliary qubits through the syndrome-extraction circuit. Under this assumption, interactions involving lost qubits become non-entangling, so hook-error propagation~\cite{tomita2014low, 7lm4-3bnh, hirai2026hookssurfacecodedistancepreserving} is determined by the remaining sequence of CX interactions rather than the original circuit. As a consequence, the support and, for topological codes such as the rotated surface code, the orientation of hook errors may differ from those in the loss-free circuit. Since this effect is not required for the loss-inference framework developed in the main text, we defer a detailed discussion to Appendix~\ref{app:hook_error}.

\section{Sufficient Condition on Detectable Loss Errors}
\label{sec:condition}

In this section, we derive a commutation-based criterion for the detectability of loss errors under the noise model introduced in Sec.~\ref{sec:assumption}.
The key idea is that, after puncturing the lost qubits, the commutation relations among the effective check operators may change.
If anticommutation emerges among the post-loss check operators, then the corresponding syndrome outcomes can no longer remain deterministic, which makes the loss pattern detectable in principle.

For an $n$-qubit Pauli operator $s$, let $(s)_q \in \{I,X,Y,Z\}$ denote its local Pauli component on qubit $q$.
For a set of lost qubits $L \subseteq \{1,\dots,n\}$, let $s'$ denote the punctured operator obtained from $s$ by removing its local Pauli factors on all qubits in $L$.

\begin{theorem}[Parity Criterion for Anticommutation of Punctured Checks]
\label{thm:parity_criterion}
Let $\mathscr{S}=\langle s_1,\dots,s_{n-k}\rangle$ be a stabilizer group, and let $L$ be the set of lost qubits.
For two stabilizer generators $s_i$ and $s_j$ with $i\neq j$, let $s_i'$ and $s_j'$ denote the corresponding punctured operators.

Define
$$ A_{ij}:=\{q\in \{1,\dots,n\} \mid (s_i)_q (s_j)_q = - (s_j)_q (s_i)_q \},$$
namely, the set of qubits on which the single-qubit Pauli factors of $s_i$ and $s_j$ anticommute.
Then the punctured operators $s_i'$ and $s_j'$ anticommute if and only if $|L \cap A_{ij}|$
is odd.
\end{theorem}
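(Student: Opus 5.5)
The argument rests on the standard fact that two tensor products of single-qubit Paulis commute or anticommute according to the parity of the number of sites where their local factors anticommute. I will first record this as a lemma. For $P=\bigotimes_q P_q$ and $Q=\bigotimes_q Q_q$, each pair $P_q,Q_q$ either commutes or anticommutes, since single-qubit Paulis satisfy $P_qQ_q=\pm Q_qP_q$. Moving the factors of $Q$ past those of $P$ site by site gives
\[
PQ=\Bigl(\prod_q \epsilon_q\Bigr)QP,\qquad \epsilon_q\in\{\pm1\},
\]
where $\epsilon_q=-1$ exactly on the sites where the local factors anticommute. Hence $PQ=(-1)^{m}QP$, with $m$ the number of such sites. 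Any overall signs or phases of $s_i,s_j$ are scalars and play no role.

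Next I apply the lemma to the original generators. Since $s_i,s_j\in\mathscr{S}$ commute, $|A_{ij}|$ is even. This is where the hypothesis that $\mathscr{S}$ is a stabilizer group is used.

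Then I apply the lemma to the punctured operators $s_i',s_j'$, viewed as Paulis on the surviving qubits $\{1,\dots,n\}\setminus L$. Their local factors on each surviving qubit $q\notin L$ coincide with $(s_i)_q,(s_j)_q$. The set of anticommuting sites of $s_i',s_j'$ is therefore exactly $A_{ij}\setminus L$. Equivalently, replacing the factors on $L$ by identity gives the same count, since the identity commutes with everything. So $s_i'$ and $s_j'$ anticommute iff $|A_{ij}\setminus L|$ is odd.

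Finally, since $|A_{ij}|=|A_{ij}\setminus L|+|A_{ij}\cap L|$ and $|A_{ij}|$ is even, the two parts have the same parity. Thus $|A_{ij}\setminus L|$ is odd iff $|L\cap A_{ij}|$ is odd, which is the claim.

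There is no real obstacle in this argument. The only point needing care is that puncturing leaves the surviving local factors unchanged. The sign ambiguity from Assumption~\ref{assumption_noise} multiplies $s_i'$ by $\pm1$, and a local Pauli frame conjugates both operators by the same Pauli. Neither changes commutation relations, so the criterion concerns the operators up to sign, as the statement intends.
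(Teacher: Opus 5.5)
Your proposal is correct and follows the paper's proof essentially step for step. Both use that commuting generators give $|A_{ij}|$ even, that the punctured operators anticommute locally exactly on $A_{ij}\setminus L$, and the parity identity $|A_{ij}\setminus L| \equiv |A_{ij}\cap L| \pmod 2$; your proposal additionally spells out the standard site-by-site commutation lemma and the harmlessness of the sign/frame ambiguity, which the paper leaves implicit.
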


\begin{proof}
Since $s_i$ and $s_j$ are stabilizer generators, they commute.
For Pauli operators, this is equivalent to the statement that the number of qubits on which $(s_i)_q$ and $(s_j)_q$ anticommute locally is even.
Thus, $|A_{ij}| \equiv 0 \pmod 2.$

After puncturing the lost qubits in $L$, the operators $s_i'$ and $s_j'$ anticommute locally exactly on the qubits in $A_{ij}\setminus L$.
Therefore, $s_i'$ and $s_j'$ anticommute if and only if
$|A_{ij}\setminus L|$ is odd.

Since $|A_{ij}\setminus L| = |A_{ij}| - |A_{ij}\cap L|$, we obtain modulo $2$ that
\[
|A_{ij}\setminus L|
\equiv
|A_{ij}\cap L|
\pmod 2,
\]
because $|A_{ij}|$ is even.
Hence $s_i'$ and $s_j'$ anticommute if and only if $|A_{ij}\cap L|$ is odd.
\end{proof}

Theorem~\ref{thm:parity_criterion} gives an algebraic characterization of when anticommutation emerges among the post-loss check operators.
We next show that such anticommutation necessarily leads to non-deterministic syndrome outcomes.

\begin{lemma}[Anticommuting Observables Cannot Be Simultaneously Deterministic]
\label{lem:anticommuting_nondeterministic}
Let $A$ and $B$ be Hermitian Pauli operators satisfying $AB=-BA$.
Then no quantum state can be a simultaneous eigenstate of both $A$ and $B$.
In particular, there is no state for which the outcomes of both observables are simultaneously deterministic.
\end{lemma}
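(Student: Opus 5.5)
We argue by contradiction, using only the algebra of Hermitian Pauli operators. Suppose some (normalized) state $\ket{\psi}$ is a simultaneous eigenstate of $A$ and $B$, say $A\ket{\psi}=a\ket{\psi}$ and $B\ket{\psi}=b\ket{\psi}$. Since $A$ and $B$ are Hermitian Pauli operators, $A^2=B^2=I$, so $a,b\in\{\pm1\}$; in particular $a\neq 0$ and $b\neq 0$.

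The key step is to evaluate $AB\ket{\psi}$ in two ways. Directly, $AB\ket{\psi}=A(b\ket{\psi})=ab\ket{\psi}$. Using $AB=-BA$, we also get $AB\ket{\psi}=-BA\ket{\psi}=-ab\ket{\psi}$. Subtracting gives $2ab\ket{\psi}=0$. Because $ab=\pm1$ and $\ket{\psi}\neq 0$, this is a contradiction. An equivalent check uses expectation values: $\bra{\psi}AB\ket{\psi}=ab$ from the two eigen-equations, while anticommutation forces $\bra{\psi}AB\ket{\psi}=-\bra{\psi}BA\ket{\psi}=-ab$.

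For the ``in particular'' clause, we must connect determinism to being an eigenstate, including for mixed states. An observable $A$ with eigenvalues $\pm1$ yields a deterministic outcome on $\rho$ exactly when $\operatorname{Tr}(\rho\,\Pi^{A}_{a})=1$ for one of its spectral projectors $\Pi^A_a=(I+aA)/2$. This holds if and only if $\Pi^A_a\rho=\rho$, i.e.\ the support of $\rho$ lies in the $a$-eigenspace of $A$. If both $A$ and $B$ are deterministic on $\rho$, then any nonzero vector in the support of $\rho$ is a simultaneous eigenvector of $A$ and $B$. This contradicts the first part.

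A quantitative alternative also works: for any state, $\langle A\rangle^2+\langle B\rangle^2\le 1$ when $\{A,B\}=0$, since $(\alpha A+\beta B)^2=(\alpha^2+\beta^2)I$. This forbids $|\langle A\rangle|=|\langle B\rangle|=1$ simultaneously. No step here is a real obstacle; the only care needed is the mixed-state reduction from determinism to support in an eigenspace.
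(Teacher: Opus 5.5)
Your proposal is correct, and its core is exactly the paper's argument: assume a common eigenvector with eigenvalues $a,b\in\{\pm1\}$, then compute $AB\ket{\psi}$ once as $ab\ket{\psi}$ and once, via anticommutation, as $-ab\ket{\psi}$, which forces $ab=-ab$. Your reduction from determinism on a mixed state $\rho$ to the support of $\rho$ lying in a joint eigenspace, and your bound $\langle A\rangle^2+\langle B\rangle^2\le 1$, go slightly beyond the paper, which simply identifies ``deterministic'' with ``eigenstate'' in the \emph{in particular} clause.
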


\begin{proof}
Suppose, for contradiction, that there exists a state $\ket{\psi}$ such that
$A\ket{\psi}=a\ket{\psi}, B\ket{\psi}=b\ket{\psi}$
for some $a,b\in\{\pm1\}$.
Then
\[
AB\ket{\psi} = ab\ket{\psi}.
\]
On the other hand, since $AB=-BA$,
\[
AB\ket{\psi}
=
-BA\ket{\psi}
=
-ab\ket{\psi}.
\]
Hence $ab=-ab$,
which is impossible.
Therefore, no such state exists.
\end{proof}

Lemma~\ref{lem:anticommuting_nondeterministic} implies that once a pair of punctured check operators anticommutes, the corresponding syndrome outcomes cannot both remain deterministic. This is closely related to the well-known fact that eigenbases of anticommuting Pauli operators are mutually unbiased, implying that an eigenstate of one operator yields uniformly random outcomes when measured in the eigenbasis of the other~\cite{lawrence2004mutually}.
This immediately leads to the following detectability statement for code states.

\begin{theorem}[Anticommutation Implies Detectability of Loss]
\label{thm:detectability_code_state}
Assume that, before the loss event, the system is in a code space stabilized by the original stabilizer group $\mathscr{S}$.
Let $L$ be a loss pattern, and suppose that the corresponding punctured check operators contain a pair $s_i',s_j'$ such that
\[
s_i's_j'=-s_j's_i'.
\]
Then the post-loss syndrome outcomes cannot all remain deterministic in the same way as in the loss-free case.
Consequently, the syndrome statistics after the loss event differ from those of the loss-free case, and the loss pattern $L$ is therefore detectable in principle by repeated syndrome extraction.
\end{theorem}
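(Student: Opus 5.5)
The proof will combine Lemma~\ref{lem:anticommuting_nondeterministic} with the operational model of Sec.~\ref{sec:assumption}. By Assumption~\ref{assumption_noise}, and the stated temporal model in which $L$ is fixed during each round, every post-loss syndrome-extraction circuit for generator $s_m$ implements a projective measurement of $\pm s_m'$ on the surviving qubits. The sign is fixed by the Pauli frame induced on those qubits, and a sign or Pauli frame cannot change whether an outcome is deterministic: it only relabels $\pm1$ outcomes, and a deterministic outcome stays deterministic. Hence I may work with the observables $s_i'$ and $s_j'$ themselves. In the loss-free case the code state is a simultaneous $+1$ eigenstate of all $s_m$, so every round returns the trivial syndrome with certainty. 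The task is therefore to show that, after loss, at least one of $s_i'$ or $s_j'$ yields a non-deterministic outcome in some round.

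First I note that the surviving register right after the loss event is described by some density operator $\rho$ on the qubits outside $L$. I will not try to compute $\rho$; the argument only needs that it is a valid state.

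Second, I consider consecutive rounds $t$ and $t+1$ that both measure $s_i'$ and $s_j'$, and suppose for contradiction that all outcomes are deterministic. The outcome of $s_j'$ in round $t$ is deterministic, so the projective measurement leaves the state in an eigenspace of $s_j'$. When $s_i'$ is measured next, in round $t+1$ or later in the same round, this happens on a state $\sigma$ that is an eigenstate of $s_j'$ with eigenvalue $b$. For $s_i'$ to also be deterministic, $\sigma$ must lie in an eigenspace of $s_i'$ with eigenvalue $a$. This contradicts Lemma~\ref{lem:anticommuting_nondeterministic}, which extends to mixed states: $\sigma$ supported on the joint eigenspace of both observables would require a common eigenvector. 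I can make this quantitative in the same way. Because $s_i'$ anticommutes with $s_j'$, one has $\mathrm{tr}(\sigma s_i') = \mathrm{tr}(\sigma s_j' s_i' s_j') = -\mathrm{tr}(\sigma s_i')$. Therefore $\langle s_i'\rangle_\sigma = 0$ and the outcome is uniformly random. This uniform-randomness statement is the cleanest form, and I would present it as the core step.

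Third, I compare with the loss-free statistics. There every round returns $+1$ deterministically, but after loss the measurement of $s_i'$ gives each outcome with probability $1/2$ in every round following a measurement of $s_j'$. The distributions over repeated syndrome records therefore differ. In particular, the probability that the records look loss-free decays as $2^{-r}$ over $r$ rounds, so $L$ is detectable in principle.

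The main obstacle is making the operational claim precise, namely that the faulty circuit really is a projective measurement of $\pm s_m'$ whose post-measurement state is an eigenstate. Data-qubit Pauli channels from Assumption~\ref{assumption_noise} act between measurements and could flip signs. I will handle this by noting that any Pauli applied to $\sigma$ maps an eigenstate of $s_j'$ to an eigenstate of $s_j'$, possibly with flipped sign. The identity $\langle s_i'\rangle = 0$ then still holds, so the randomness of $s_i'$ survives any Pauli frame.
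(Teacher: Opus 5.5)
Your proposal is correct and follows the paper's route: loss-free checks are deterministic, and Lemma~\ref{lem:anticommuting_nondeterministic} forbids the anticommuting pair $s_i',s_j'$ from being jointly deterministic. Your post-measurement-state argument, the extension to mixed states and Pauli frames, and the identity $\mathrm{tr}(\sigma s_i')=0$ supply rigor the paper's proof leaves implicit; the only overreach is the claim that $s_i'$ is conditionally uniform in every round (giving $2^{-r}$), since an intervening measurement of a third check anticommuting with $s_j'$ can spoil it (e.g.\ for the two $X$-plaquettes $X_a,X_b$ and a $Z$-plaquette $Z_c$ around a lost bulk qubit of the rotated surface code, measuring $X_a'$ between $Z_c'$ and $X_b'$ fixes $X_b'$ because $X_a'X_b'=X_aX_b$ is conserved), though some check in each such window is still uniformly random and your contradiction argument, which is all the theorem needs, is unaffected.
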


\begin{proof}
Before the loss event, the system is assumed to be stabilized by $\mathscr{S}$.
Hence, each stabilizer generator has a deterministic measurement outcome $+1$ in the loss-free case.

Now suppose that, after puncturing the lost qubits, there exists an anticommutative pair $s_i',s_j'$.

By Lemma~\ref{lem:anticommuting_nondeterministic}, there is no quantum state for which both $s_i'$ and $s_j'$ have deterministic outcomes simultaneously.
Therefore, the post-loss state cannot reproduce the same deterministic syndrome behavior as the original loss-free code state.

It follows that, under repeated syndrome extraction, the corresponding checks exhibit random outcomes and therefore the syndrome statistics after loss must differ from those of the loss-free case. Therefore, the loss pattern is detectable in principle by repeated syndrome extraction.
\end{proof}

Combining Theorems~\ref{thm:parity_criterion} and \ref{thm:detectability_code_state}, we obtain the following sufficient criterion for loss detectability.

\begin{corollary}[Parity-Based Sufficient Condition for Detectable Loss]
\label{cor:parity_detectable}
Assume that, before the loss event, the system is in a code state stabilized by $\mathscr{S}$.
If there exists a pair of stabilizer generators $s_i,s_j$ such that
\[
|L\cap A_{ij}|
\]
is odd, then the loss pattern $L$ is detectable in principle by repeated syndrome extraction.
\end{corollary}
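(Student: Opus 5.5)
The corollary follows by composing the two theorems just proved, so the plan is a two-step chaining argument with no new computation.

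First, fix a loss pattern $L$ and a pair of generators $s_i,s_j$ with $i\neq j$ for which $|L\cap A_{ij}|$ is odd. Apply Theorem~\ref{thm:parity_criterion} in the ``if'' direction to conclude that the punctured operators satisfy $s_i's_j'=-s_j's_i'$. The only check needed is that the hypotheses of that theorem are met. The $s_i,s_j$ are generators of the stabilizer group $\mathscr{S}$, so they commute and $|A_{ij}|$ is even, as used in its proof. The punctured operators are defined by removing the factors on $L$, exactly as in Sec.~\ref{sec:condition}.

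Second, under the standing assumption that the pre-loss state is stabilized by $\mathscr{S}$, feed this anticommuting pair into Theorem~\ref{thm:detectability_code_state}. That theorem yields that the post-loss syndrome outcomes cannot all be deterministic, so the syndrome statistics differ from the loss-free case. Hence $L$ is detectable in principle by repeated syndrome extraction.

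The main subtlety, rather than a true obstacle, is the identification between the operator the failed extraction circuit actually measures and $s_i'$. Under Assumption~\ref{assumption_noise} that measured operator is $s_i'$ only up to a sign and a local Pauli frame on surviving qubits. I would note that signs and conjugation by a fixed Pauli preserve anticommutation: if $P$ is a Pauli, then $P s_i' P$ and $P s_j' P$ anticommute iff $s_i'$ and $s_j'$ do, and overall signs $\pm$ are irrelevant to the relation. So Lemma~\ref{lem:anticommuting_nondeterministic} still applies to the physically measured operators, and the corollary follows.
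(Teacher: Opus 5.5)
Your proposal is correct and follows the paper's proof exactly. Theorem~\ref{thm:parity_criterion} turns an odd $|L\cap A_{ij}|$ into anticommutation of $s_i'$ and $s_j'$, and Theorem~\ref{thm:detectability_code_state} then gives detectability; your added remark that signs and a local Pauli frame preserve anticommutation is a valid clarification that the paper leaves implicit in Sec.~\ref{sec:assumption}.
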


\begin{proof}
By Theorem~\ref{thm:parity_criterion}, the condition that $|L\cap A_{ij}|$ is odd implies that the punctured operators $s_i'$ and $s_j'$ anticommute.
The conclusion then follows immediately from Theorem~\ref{thm:detectability_code_state}.
\end{proof}

For example, Fig.~\ref{fig:commutation} shows a commutation relationship for two stabilizer checks $ZZZZ$ and $XXXX$ when a loss error occurred on data qubits.

\begin{figure}
    \centering
    \includegraphics[width=0.7\linewidth]{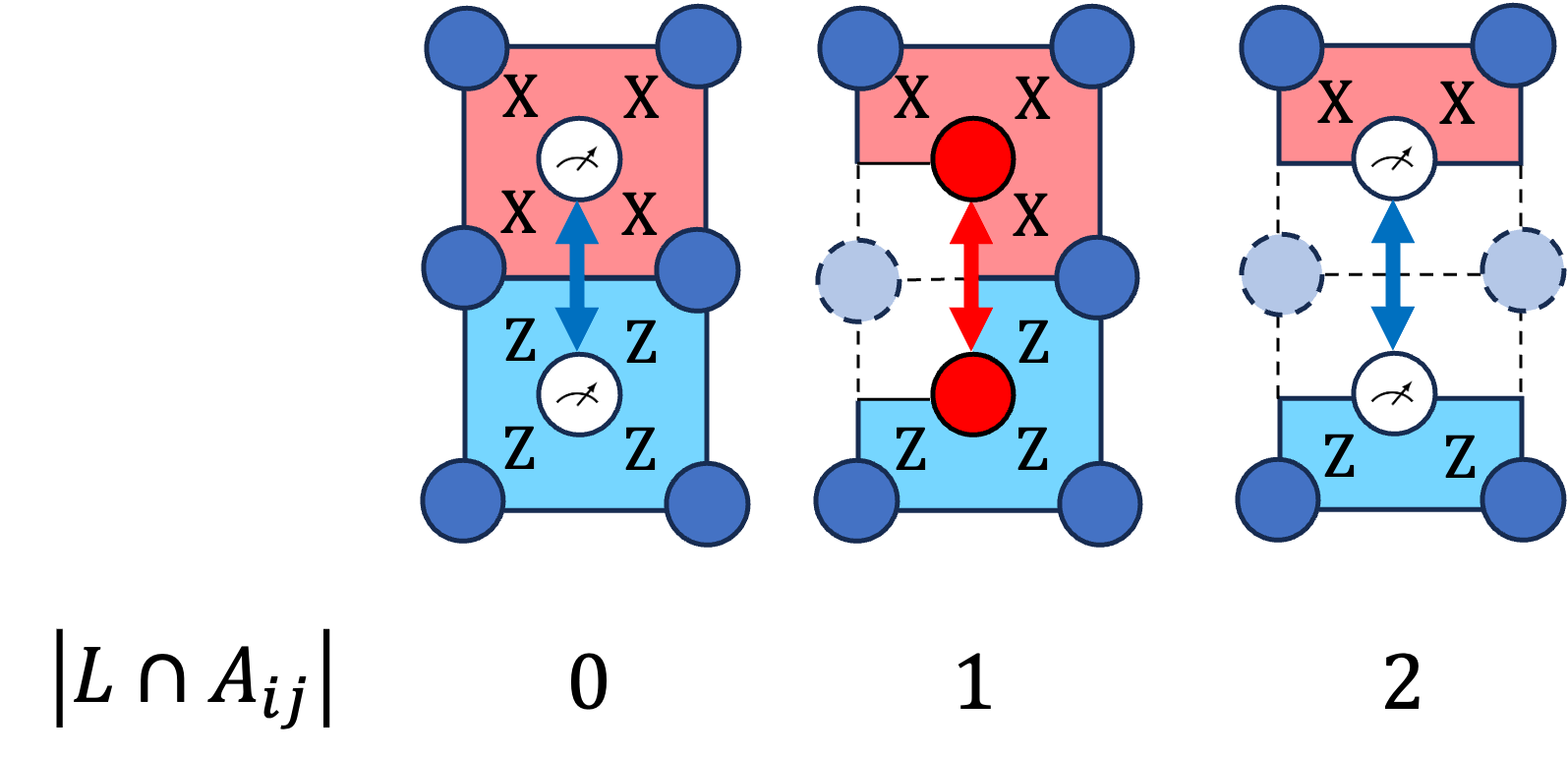}
    \caption{The commutation (blue double arrow) / anticommutation (red double arrow) relations of a pair of check operators $s_i$ and $s_j$. Blue vertices represent surviving data qubits and the dotted light blue vertices represent the lost qubits. Each vertex on the face is the auxiliary qubit for a check operator.
    (Left) Precondition: $A_{ij} = 2 \equiv 0 \pmod 2 $ so that two checks $s_i$ and $s_j$ commute. (Center) For the case of single loss, $|A_{ij}\cap L| \equiv 1 \pmod 2$, and so two checks anticommute. This loss is detectable. (Right) For an even number of losses on $A_{ij}$, two checks commute, therefore not detectable from the statistics of two stabilizer checks.}
    \label{fig:commutation}
\end{figure}

Corollary ~\ref{cor:parity_detectable} gives a sufficient condition for detecting loss from repeated stabilizer measurements. The syndrome-record distribution differs from that of the loss-free case. It does not imply unique identification of the loss pattern or recovery of the encoded state. Appendix \ref{app:operational-kl} formalizes this distinction in relation to the Knill–Laflamme framework~\cite{knill1997theory}.

In this sense, the loss event does not merely modify the measured stabilizer operators but can also be viewed as inducing a new gauge structure.
That is, the newly generated post-loss check operators can be regarded as gauge operators, so that the syndrome-extraction process effectively defines a dynamical code~\cite{Fu2025errorcorrectionin} whose operator structure changes after the loss event.\\

Corollary~\ref{cor:parity_detectable} addresses detectability, namely whether a loss event can be distinguished from the loss-free case. A stronger question is whether distinct loss events can be distinguished from one another using the resulting pattern of non-deterministic checks. We defer this inference question to Sec.~\ref{sec:problem}.

\section{the qubit loss inference problem}
\label{sec:problem}
The results of Sec~\ref{sec:condition} identify the key theoretical mechanism behind detectable qubit loss: puncturing by loss can change the commutation structure of stabilizer generators, and when this produces anticommutation among the effective post-loss checks, repeated syndrome measurements necessarily become random. Thus, the detectable-loss condition of Corollary~\ref{cor:parity_detectable} converts an algebraic property of punctured stabilizers into an experimentally accessible signature.

This observation motivates the qubit loss inference problem studied in this section. Rather than detecting loss by dedicated hardware, we ask whether the loss pattern can be inferred directly from repeated stabilizer measurements as an instance on a rotated surface code patch shown in Fig.~\ref{fig:problem_pic}. Concretely, the goal is to infer the set of lost data qubits from the subset of stabilizer checks whose measurement outcomes exhibit randomness.

The inference task consists of two stages. First, from a finite number of repeated syndrome-extraction rounds, classifying each stabilizer check as deterministic or random. Second, given the observed set of random checks, reconstruct a loss pattern that is consistent with the detectable-loss condition. We formulate this inference problem for general stabilizer codes.

\begin{figure}[htbp]
    \centering
    \includegraphics[width=0.7\linewidth]{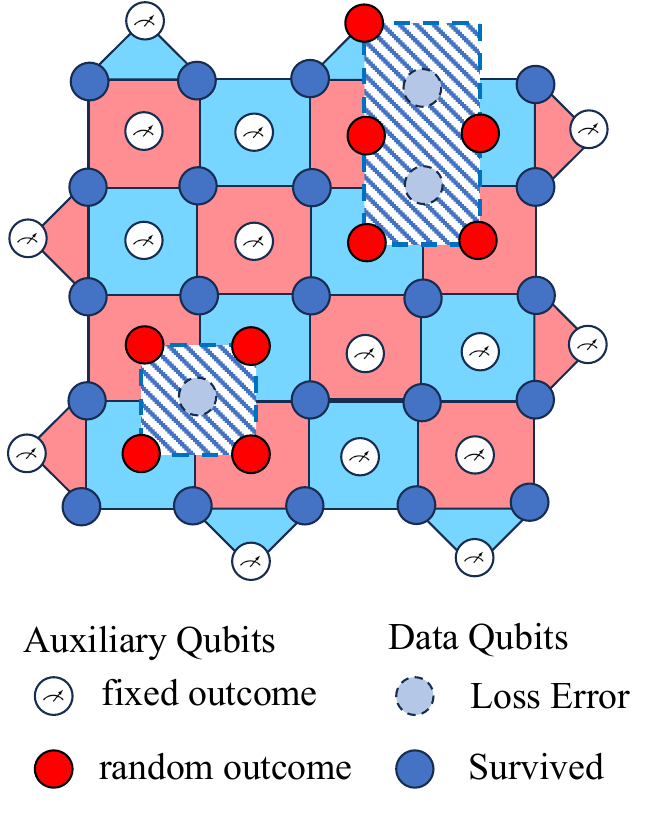}
    \caption{An instance of the loss inference problem on a rotated surface code. Given checks with random outcome depicted as red vertices, find the loss error pattern shown as the shaded area.}
    \label{fig:problem_pic}
\end{figure}

\subsection{Classification of Non-Deterministic Checks}
\label{subsec:classification}

Here we describe the classification method for random/deterministic checks
as it is used throughout the simulation study of Sec.~\ref{sec:eval}.

We collect $N$ repeated circuit shots. For shot $\alpha \in \{1,\dots,N\}$, check $i$, and round $r \in \{1,\dots,T\}$ within an observation window of $T$ rounds, let
$o_{i,r}^{(\alpha)} \in \{0,1\}$
denote the detector outcome of check $i$ at round $r$ in shot $\alpha$, i.e.\ the
indicator that the measured value of stabilizer check $s_i$ differs from its value at the
previous round. We first average over the shot ensemble to obtain an
empirical per-round flip rate,
\begin{equation}
  \hat p_{i,r} \;=\; \frac{1}{N}\sum_{\alpha=1}^{N} o_{i,r}^{(\alpha)},
  \label{eq:round-flip-rate}
\end{equation}
which estimates the true round-$r$ flip probability $\pi_{i,r}$ of check $i$
under the given scenario. Given a window width $W \le T$, we then take the
sliding average of $\hat p_{i,r}$ over the round axis and record its maximum
over all valid window placements $r_0 \in \{1,\dots,T-W+1\}$,
\begin{equation}
  q_i(W) \;=\; \max_{r_0} \; \frac{1}{W}\sum_{r=r_0}^{r_0+W-1} \hat p_{i,r}.
  \label{eq:windowed-score}
\end{equation}
Given a threshold $\tau$, check $i$ is classified as producing a random
outcome if
\begin{equation}
  q_i(W) \;\ge\; \tau,
  \label{eq:threshold-rule}
\end{equation}
and the observed random-check set is $R_{\mathrm{obs}} = \{\, i \mid q_i(W)
\ge \tau \,\}$. Unlike a symmetric randomness score, Eq.~\eqref{eq:threshold-rule} is a one-sided test on the flip rate itself rather than on its distance from $1/2$. This is sufficient to separate the two regimes relevant to Sec.~\ref{sec:assumption}. A loss-free check has
$\pi_{i,r}$ set by the background Pauli error rate and is $\ll \tau$, while a check made anticommuting by a loss event approaches $\pi_{i,r} \approx 1/2$ within the rounds in which the anticommutation is active. A fully deterministic check that flips on \emph{every} round ($q_i = 1$) would also be classified as random under Eq.~\eqref{eq:threshold-rule}, which a symmetric score would correctly reject; we do not observe this degenerate case arising from the noise model of Sec.~\ref{sec:assumption}, since puncturing only introduces anticommutation, not deterministic flipping, and we do not rely on the one-sided rule to reject it.

Extending the classifier of
Eq.~\eqref{eq:threshold-rule} to the genuinely online, single-shot setting is
left to future work, consistent with the outlook of
Sec.~\ref{sec:conclusion}.




\subsection{Loss Inference Problem for General Stabilizer Codes}
\label{sec:general_loss_inference}
Let $S=\langle s_1,\dots,s_{m}\rangle$ be a stabilizer code with $n$ data qubits and $m=n-k$ stabilizer generators.
After performing the classification in Sec.~\ref{subsec:classification}. Then, we obtain an observed random check set
$$R_{\mathrm{obs}} \subseteq \{1,\dots,m\}$$
where $i\in R_{\mathrm{obs}}$ means that the measurement outcome sequence of check $s_i$ is classified as random.

For a candidate loss pattern $L\subseteq \{1,\dots,n\}$, define the predicted random-check set by
\begin{eqnarray}
&&\mathcal{R}(L) :=\nonumber\\
&&\left\{ i \in \{1,\dots,m\}\;\middle|\; \exists j\neq i \text{ s.t. } |L\cap A_{ij}| \equiv 1 \pmod 2 \right\}\nonumber\\
&& 
\label{eq:predicted_random_checks}
\end{eqnarray}

where $A_{ij}$ is the set of qubits on which $s_i$ and $s_j$ anticommute locally, as defined in Theorem~\ref{thm:parity_criterion}.

This leads to the following loss inference problem.

\begin{tcolorbox}
\begin{problem}[Exact Loss Inference Problem]\label{prob:loss_inf}
Given an observed random-check set $R_{\mathrm{obs}}$, \\
find a loss pattern 
$L\subseteq \{1,\dots,n\}$ such that $\mathcal{R}(L)=R_{\mathrm{obs}}$.
\end{problem}
\end{tcolorbox}

Under an i.i.d.\ loss model with loss probability $p_{\mathrm{loss}}\in(0,1)$, the maximum-likelihood loss inference problem is
\begin{eqnarray}
L^\star\nonumber
&=&
\arg\max_{L:\,\mathcal{R}(L)=R_{\mathrm{obs}}}\Pr(L)\nonumber\\
&=&\arg\max_{L:\,\mathcal{R}(L)=R_{\mathrm{obs}}}
p_{\mathrm{loss}}^{|L|}(1-p_{\mathrm{loss}})^{n-|L|}.\nonumber
\label{eq:ml_loss_inference}
\end{eqnarray}
For the physically relevant regime $p_{\mathrm{loss}}<1/2$, this is equivalent to the minimum-weight problem
\begin{equation}
L^\star
=
\arg\min_{L:\,\mathcal{R}(L)=R_{\mathrm{obs}}}|L|.\nonumber
\label{eq:min_weight_loss_inference}
\end{equation}

In general, the map $L\mapsto \mathcal{R}(L)$ is nonlinear because the randomness of check $i$ is determined by the existence of at least one witness $j$ satisfying an odd-parity condition. 

\begin{remark}[Identifiability of loss errors]
In general, the map $L \mapsto R(L)$ is not necessarily injective. Therefore, distinct loss patterns may yield the same observed random-check set $R_{\mathrm{obs}}$, and the exact loss inference problem need not admit a unique solution. In particular, single-qubit loss events are uniquely identifiable from the observed random-check set if and only if the map $q \mapsto R(\{q\})$ is injective.
\end{remark}

It is therefore useful to introduce the \emph{check-interaction graph}
\[
G=(V,E),\qquad V=\{1,\dots,m\},
\]
where each vertex is the check and an edge $(i,j)\in E$ is present whenever $A_{ij}\neq\emptyset$.
Each data qubit $q\in\{1,\dots,n\}$ induces a subset of edges
\begin{equation}
E_q := \{(i,j)\in E \mid q\in A_{ij}\},\nonumber
\label{eq:qubit_edge_set}
\end{equation}
meaning that losing qubit $q$ flips the parity on every edge in $E_q$. For example, Fig.~\ref{fig:rotation_for_interaction} is a rotated surface code~\cite{bombin2007optimal}. Fig.~\ref{fig:interaction} shows its check interaction graph and $E_5$. This check-interaction graph can be obtained by Algorithm~\ref{alg:make_interaction}.

\begin{figure}[htbp]
    \centering
    \includegraphics[width=0.4\linewidth]{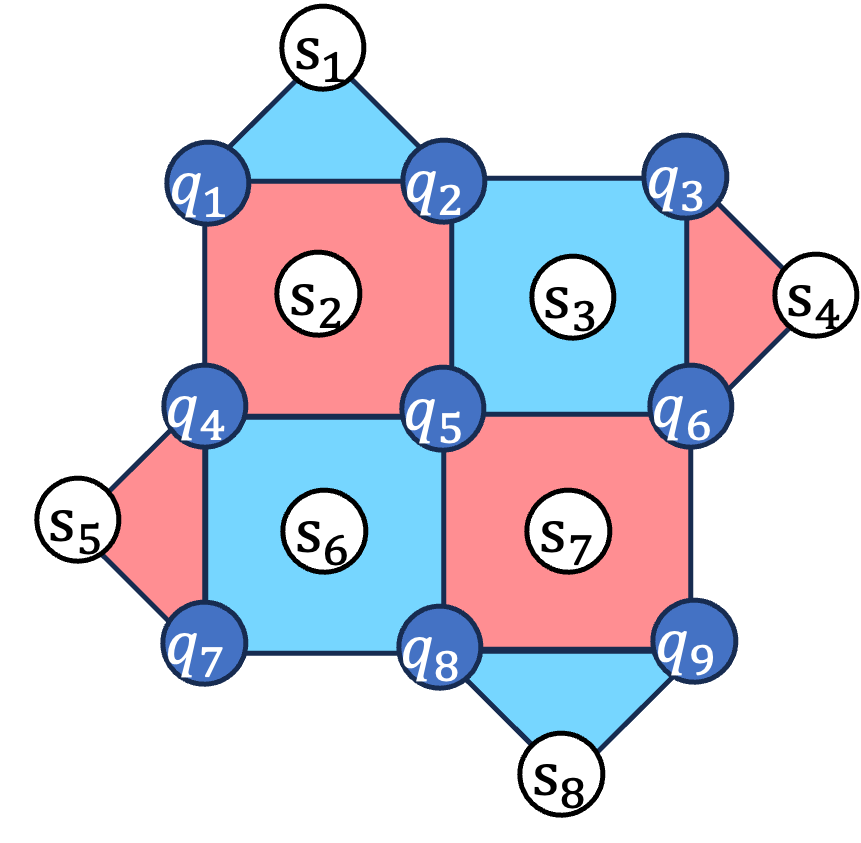}
    \caption{A rotated surface code.}
    \label{fig:rotation_for_interaction}
\end{figure}
\begin{figure}[htbp]
\centering
    \subfloat[]{%
  \includegraphics[clip,width=0.7\columnwidth]{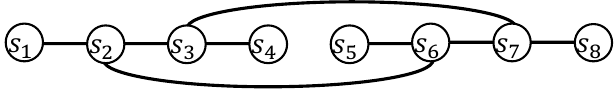}
    }\\
    \subfloat[]{%
  \includegraphics[clip,width=0.7\columnwidth]{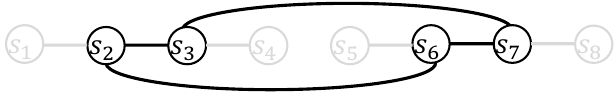}
    }
\caption{(a) The check interaction graph of the rotated surface code shown in Fig.~\ref{fig:rotation_for_interaction}. (b) $E_5$, the induced subset of edges whose label will be flipped by the loss on $q_5$.}
\label{fig:interaction}
\end{figure}

Hence a candidate loss pattern $L$ induces an edge-parity label configuration
$$
p_L(i,j)
=
|L\cap A_{ij}| \pmod 2,
\qquad (i,j)\in E,\nonumber
$$
and Eq.~\eqref{eq:predicted_random_checks} can be rewritten as
$$
\mathcal{R}(L) = \{\, i\in V \mid \exists j \text{ s.t. } p_L(i,j)=1 \,\}.
$$

Therefore, the loss inference problem reduces to finding a minimum-weight subset of qubits whose induced edge-label configuration is consistent with the observed random-check set $R_{\mathrm{obs}}$.

If the stabilizer code is LDPC, namely each data qubit participates in at most $O(1)$ stabilizer generators and each generator has weight $O(1)$, then $|E|=O(n)$ and each set $E_q$ has size $O(1)$. In that case, constructing the graph and evaluating $\mathcal{R}(L)$ both require linear time $O(n)$.

The complexity bound above applies only to the primitive operations of constructing the graph (Algorithm~\ref{alg:make_interaction}) and evaluating the map $L\mapsto \mathcal{R}(L)$ for a fixed candidate loss pattern (Algorithm~\ref{alg:evaluate-random-set}).
It does \emph{not} by itself imply a linear-time algorithm for solving the full exact or maximum-likelihood inference problem, since that problem still requires searching over candidate subsets $L$.
For general stabilizer codes, we therefore make no claim here of an efficient exact algorithm for the full optimization problem.
A practical greedy heuristic for the relaxed covering problem, applied to rotated surface codes, is developed in the following sections.

\begin{algorithm}[H]
\caption{Constructing the check-interaction graph}
\label{alg:make_interaction}
\begin{algorithmic}[1]
\State \textbf{Input:} for each data qubit $q$, the set $S(q)$; local Paulis $(s_i)_q$; loss pattern $L$
\State \textbf{Output:} $G=(V,E)$, $\{E_q\}$, and $\mathcal{R}(L)$
\State $V \gets \{1,\dots,m\}$; $E \gets \emptyset$
\Comment{Construct the interaction graph $G$.}
\For{$q=1$ \textbf{to} $n$} 
  \State $E_q \gets \emptyset$
  \ForAll{unordered pairs $(i,j)$ with $i,j\in S(q)$ and $i<j$}
    \If{$(s_i)_q$ and $(s_j)_q$ anticommute locally}
      \State add $(i,j)$ to $E$; add $(i,j)$ to $E_q$
    \EndIf
  \EndFor
\EndFor
\end{algorithmic}
\end{algorithm}

\begin{algorithm}[H]
\caption{Evaluating the map $L \mapsto \mathcal{R}(L)$}
\label{alg:evaluate-random-set}
\begin{algorithmic}[1]
\Require A candidate loss pattern $L \subseteq \{1,\dots,n\}$; a precomputed check-interaction graph $G=(V,E)$; induced edge sets $\{E_q\}_{q=1}^n$
\Ensure The predicted random-check set $\mathcal{R}(L)$
\ForAll{$e \in E$}
    \State $p_L(e) \gets 0$
\EndFor

\ForAll{$q \in L$}
    \ForAll{$e \in E_q$}
        \State $p_L(e) \gets 1 - p_L(e)$
    \EndFor
\EndFor

\State $\mathcal{R}(L) \gets \emptyset$
\ForAll{$e=(i,j) \in E$}
    \If{$p_L(e)=1$}
        \State $\mathcal{R}(L) \gets \mathcal{R}(L) \cup \{i,j\}$
    \EndIf
\EndFor

\State \Return $\mathcal{R}(L)$
\end{algorithmic}
\end{algorithm}

\section{Minimum Set Cover as a relaxation of loss inference}
\label{sec:min_set_cov}
Before introducing the relaxed loss inference problem, we briefly recall the minimum set cover problem~\cite{karp2009reducibility, lemke1971set}. 

\begin{tcolorbox}
\begin{problem}[Minimum Set Cover Problem]\label{prob:min_set_cover}
Let $U$ be a finite universe, and let $\mathcal{S}=\{S_1,\dots,S_N\}$ be a family of subsets of $U$. The minimum set cover problem asks for a subfamily of minimum cardinality whose union covers the entire universe:
\[
\begin{aligned}
\text{minimize}\quad & |\mathcal{C}|,\\
\text{subject to}\quad & \mathcal{C}\subseteq \mathcal{S},\\
& U \subseteq \bigcup_{S\in\mathcal{C}} S.
\end{aligned}
\]
\end{problem}
\end{tcolorbox}

Equivalently, one may assign a binary decision variable $x_k\in\{0,1\}$ to each set $S_k$, where $x_k=1$ means that $S_k$ is selected, and solve
\begin{tcolorbox}
\begin{problem}[Binary Minimum Set Cover Problem]\label{prob:min_set_cover_binary}
\[
\begin{aligned}
\text{minimize}\quad & \sum_{k=1}^N x_k,\\
\text{subject to}\quad & \sum_{k:\,u\in S_k} x_k \ge 1
\qquad \text{for all } u\in U,\\
& x_k\in\{0,1\}
\qquad \text{for all } k=1,\dots,N.
\end{aligned}
\]
\end{problem}
\end{tcolorbox}

That is, every element of the universe must be covered by at least one selected set, while the number of selected sets is minimized.

The exact loss inference problem formulated in Sec.~\ref{sec:problem} provides a natural characterization of detectable qubit loss. However, the exact constraint
\[
R(L)=R_{\mathrm{obs}}
\]
is parity-sensitive and therefore difficult to handle directly. Indeed, for a candidate loss pattern $L \subseteq \{1,\dots,n\}$, the predicted random-check set is defined by
\[
R(L)=\{\, i\in V \mid \exists j \text{ such that } p_L(i,j)=1 \,\},
\]
where
\[
p_L(i,j)=|L\cap A_{ij}| \pmod 2 .
\]
Thus, the map $L \mapsto R(L)$ is non-monotone: adding one more qubit to $L$ may create or cancel odd edge parities modulo $2$, and hence may create or remove elements from $R(L)$. For this reason, the exact inference problem is not itself a minimum set cover problem.

To obtain a more tractable surrogate, we now extract a necessary condition for exact feasibility. Any candidate loss set $L$ that exactly explains the observation $R_{\mathrm{obs}}$ must at least account for every observed random check individually. This suggests a covering viewpoint: each qubit $q$ can potentially explain a certain subset of checks, and any exact explanation must collectively cover all elements of $R_{\mathrm{obs}}$. As shown below, this observation leads to a natural minimum set cover relaxation of the exact loss inference problem.

Recall from Sec.~\ref{sec:problem} that the check-interaction graph is $G=(V,E)$, where each vertex corresponds to a stabilizer check, and each data qubit $q\in\{1,\dots,n\}$ induces the edge subset
$$
E_q := \{(i,j)\in E \mid q\in A_{ij}\}.
$$
That is, losing qubit $q$ flips the parity on every edge in $E_q$.

For each qubit $q$, define the associated check set
\[
B_q := \{\, i\in V \mid \exists j\in V \text{ such that } (i,j)\in E_q \,\}.
\]
Equivalently, $B_q$ is the set of checks incident to at least one edge whose parity can be flipped by the loss of qubit $q$.

The key observation is that any exact explanation of the observed random-check set must cover all observed random checks by these sets $\{B_q\}$.

\begin{proposition}
Let $L \subseteq \{1,\dots,n\}$ be feasible for the exact loss inference problem, namely suppose that $R(L)=R_{\mathrm{obs}}$.

Then
$$
R_{\mathrm{obs}} \subseteq \bigcup_{q\in L} B_q.
$$
\label{prop:bq}
\end{proposition}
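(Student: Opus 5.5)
The argument is a direct element-chase through the definitions. It needs only the parity description of $R(L)$ and the fact that a parity can be odd only if the set being counted is nonempty.

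Fix an arbitrary $i\in R_{\mathrm{obs}}$. Feasibility gives $R(L)=R_{\mathrm{obs}}$, so $i\in R(L)$. By the definition in Eq.~\eqref{eq:predicted_random_checks}, equivalently its edge-parity form, there is some $j\neq i$ with
\[
p_L(i,j)=|L\cap A_{ij}| \equiv 1 \pmod 2 .
\]
An odd number cannot be zero, so $L\cap A_{ij}\neq\emptyset$, and we may pick a qubit $q\in L\cap A_{ij}$.

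We now show $i\in B_q$. Since $q\in A_{ij}$, we have $A_{ij}\neq\emptyset$, so $(i,j)\in E$ by the definition of the check-interaction graph. Combined with $q\in A_{ij}$ this gives $(i,j)\in E_q$, and hence $i\in B_q$ by the definition of $B_q$. Because $q\in L$, it follows that $i\in\bigcup_{q\in L}B_q$. As $i\in R_{\mathrm{obs}}$ was arbitrary, the inclusion follows.

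There is no substantive obstacle. The only point needing care is a bookkeeping one: $E$ consists of unordered pairs, and $B_q$ must be read symmetrically, i.e.\ $i\in B_q$ whenever $i$ is an endpoint of some edge in $E_q$. This matches how Algorithm~\ref{alg:evaluate-random-set} adds both endpoints $i$ and $j$ of an odd edge.

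It is also worth noting what the argument does and does not use. It needs only the inclusion $R_{\mathrm{obs}}\subseteq R(L)$, not full equality. This is why the covering condition is a genuine relaxation: a feasible $L$ always covers $R_{\mathrm{obs}}$, but a cover need not be feasible.
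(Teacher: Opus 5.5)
Your proposal is correct and follows the paper's proof essentially verbatim. Both take $i\in R_{\mathrm{obs}}=R(L)$, extract a witness $j$ with $|L\cap A_{ij}|$ odd, pick $q\in L\cap A_{ij}$, and conclude $(i,j)\in E_q$, hence $i\in B_q$; your extra remarks (that $(i,j)\in E$ because $A_{ij}\neq\emptyset$, that $B_q$ must be read symmetrically over unordered edges, and that only the inclusion $R_{\mathrm{obs}}\subseteq R(L)$ is used) are accurate details the paper leaves implicit.
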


\begin{proof}
Take any $i\in R_{\mathrm{obs}}$. Since $R(L)=R_{\mathrm{obs}}$, we have $i\in R(L)$. By the definition of $R(L)$, there exists some $j$ such that
\[
p_L(i,j)=|L\cap A_{ij}| \equiv 1 \pmod 2.
\]
Hence $L\cap A_{ij}\neq \varnothing$, so there exists at least one qubit $q\in L\cap A_{ij}$. By the definition of $E_q$, this implies $(i,j)\in E_q$, and therefore $i\in B_q$. Since $q\in L$, we conclude that
\[
i\in \bigcup_{q\in L} B_q.
\]
Because $i\in R_{\mathrm{obs}}$ was arbitrary, the claim follows.
\end{proof}

Proposition~\ref{prop:bq} shows that exact feasibility implies a covering constraint. This yields a natural relaxation of the exact loss inference problem.

\begin{corollary}
Define
\[
\mathcal{F}_{\mathrm{exact}}
:=
\{\,L\subseteq \{1,\dots,n\}\mid R(L)=R_{\mathrm{obs}}\,\},
\]
and
\[
\mathcal{F}_{\mathrm{cov}}
:=
\left\{
L\subseteq \{1,\dots,n\}
\ \middle|\
R_{\mathrm{obs}}\subseteq \bigcup_{q\in L} B_q
\right\}.
\]
Then
\[
\mathcal{F}_{\mathrm{exact}} \subseteq \mathcal{F}_{\mathrm{cov}}.
\]
Consequently, if both problems minimize $|L|$, then
\[
\min_{L\in\mathcal{F}_{\mathrm{cov}}}|L|
\le
\min_{L\in\mathcal{F}_{\mathrm{exact}}}|L|.
\]
\end{corollary}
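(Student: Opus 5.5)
The plan is to obtain the corollary directly from Proposition~\ref{prop:bq}, using an inclusion of feasible sets and the standard fact that minimizing over a larger set cannot give a larger value.

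First, we establish the inclusion $\mathcal{F}_{\mathrm{exact}}\subseteq\mathcal{F}_{\mathrm{cov}}$. Take any $L\in\mathcal{F}_{\mathrm{exact}}$, so $R(L)=R_{\mathrm{obs}}$. This is exactly the hypothesis of Proposition~\ref{prop:bq}, which yields $R_{\mathrm{obs}}\subseteq\bigcup_{q\in L}B_q$, i.e.\ $L\in\mathcal{F}_{\mathrm{cov}}$. Both families are defined over the same ground set of candidate patterns $L\subseteq\{1,\dots,n\}$, so this membership argument is all that is needed. No parity reasoning is required beyond what the proposition already encapsulates.

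Second, we deduce the inequality between the optimal values. If $\mathcal{F}_{\mathrm{exact}}$ is nonempty, let $L^\star$ attain $\min_{L\in\mathcal{F}_{\mathrm{exact}}}|L|$; the minimum exists because the family is finite. By the inclusion, $L^\star\in\mathcal{F}_{\mathrm{cov}}$, so $\min_{L\in\mathcal{F}_{\mathrm{cov}}}|L|\le|L^\star|$, and in particular the covering minimum exists. If $\mathcal{F}_{\mathrm{exact}}$ is empty, we adopt the convention $\min\emptyset=+\infty$, and the inequality holds trivially.

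There is no substantive obstacle. The only point requiring care is this degenerate case: an observation $R_{\mathrm{obs}}$ may admit no exact explanation (for instance, because of misclassified checks), while the covering problem may still be feasible. Under the $+\infty$ convention the statement remains valid, so the proof only needs to state that convention explicitly.
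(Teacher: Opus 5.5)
Your proposal is correct and takes the same approach as the paper, which leaves this corollary without a separate proof: the inclusion $\mathcal{F}_{\mathrm{exact}}\subseteq\mathcal{F}_{\mathrm{cov}}$ is Proposition~\ref{prop:bq} applied to each feasible $L$, and the inequality follows because minimizing over a larger feasible set cannot give a larger value. Your explicit convention $\min\emptyset=+\infty$ for the case of an infeasible exact problem is a small extra precaution that the paper leaves implicit.
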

Thus, the optimum of the relaxed problem provides a lower bound on the optimum of the exact loss inference problem.

Motivated by this inclusion, we define the relaxed loss inference problem as
\begin{tcolorbox}
\begin{problem}[Relaxed Loss Inference Problem]\label{prob:relaxed_loss_inf}

\[
\begin{aligned}
\text{minimize}\quad & |L|,\\
\text{subject to}\quad & R_{\mathrm{obs}} \subseteq \bigcup_{q\in L} B_q,\\
& L\subseteq \{1,\dots,n\}.
\end{aligned}
\]
This is a minimum set cover problem with universe $R_{\mathrm{obs}}$ and set family
\[
\{\, B_q\cap R_{\mathrm{obs}} \,\}_{q=1}^n.
\]

\end{problem}
\end{tcolorbox}

In other words, each qubit $q$ is regarded as a candidate set that covers the observed random checks that can be affected by the loss of $q$, and the goal is to explain all observed random checks using as few lost-qubit candidates as possible.

It is important to emphasize that this formulation is a relaxation, not an equivalent reformulation, of the exact inference problem. The exact problem requires the full parity-consistency condition $R(L)=R_{\mathrm{obs}}$, where odd edge parities generated by different lost qubits may interfere through cancellation modulo $2$. By contrast, the set cover formulation retains only the necessary condition that every observed random check be incident to at least one qubit whose loss can flip a relevant edge parity. Therefore, a feasible solution of the set cover problem need not be feasible for the exact problem. The benefit of the relaxation is that it replaces a parity-sensitive combinatorial matching problem by a monotone covering problem, which admits standard approximation algorithms.

\section{Heuristic Algorithms for Loss Inference}
\label{sec:heuristic}
\subsection{Greedy Minimum Set Cover Algorithm}
\label{subsec:greedy}

The set-cover relaxation introduced in Sec.~\ref{sec:min_set_cov} remains NP-hard in general~\cite{karp2009reducibility}. We therefore adopt a standard greedy procedure and use its output as a heuristic inferred loss set. The basic idea is to iteratively choose a qubit whose associated covering set explains as many currently uncovered random checks as possible.

\begin{algorithm}[H]
\caption{Greedy heuristic for relaxed loss inference}
\label{alg:greedy_loss_inference}
\begin{algorithmic}[1]
\Require Observed random-check set $R_{\mathrm{obs}}$; covering sets $\{B_q\}_{q=1}^n$
\Ensure Inferred loss set $\hat{L}$
\State $U \gets R_{\mathrm{obs}}$ \Comment{set of currently uncovered random checks}
\State $\hat{L} \gets \varnothing$
\While{$U \neq \varnothing$}
    \State choose
    \[
    q^\star \in \arg\max_{q\in\{1,\dots,n\}\setminus \hat{L}} |B_q \cap U|
    \]
    \State $\hat{L} \gets \hat{L} \cup \{q^\star\}$
    \State $U \gets U \setminus B_{q^\star}$
\EndWhile
\State \Return $\hat{L}$
\end{algorithmic}
\end{algorithm}

By construction, the output $\hat{L}$ is feasible for the set-cover relaxation, namely
\[
R_{\mathrm{obs}} \subseteq \bigcup_{q\in \hat{L}} B_q.
\]
Indeed, the algorithm terminates only when all elements of $R_{\mathrm{obs}}$ have been covered. It is important to emphasize, however, that the greedy procedure is an approximation algorithm for the relaxed covering problem, not an exact solver for the parity-sensitive constraint
\[
R(L)=R_{\mathrm{obs}}.
\]

Nevertheless, the greedy rule is attractive for practical inference because it is simple and depends only on the local incidence structure of the family $\{B_q\}_{q=1}^n$. Moreover, since the relaxed problem is a minimum set cover problem, the standard approximation analysis of greedy set cover applies directly: the size of the greedy solution is within the usual logarithmic factor of the optimum of the relaxed problem~\cite{slavik1996tight}.

A naive implementation recomputes the gain $|B_q\cap U|$ for every qubit at every iteration, giving $O(n)$ work per selected qubit and hence $O(n\,|\hat{L}|)$ overall; this is the implementation we use throughout Sec.~\ref{sec:eval}. Because each set $B_q$ is local and typically small in sparse code families such as the rotated surface code, we expect that a bucket-queue implementation analogous to those used in Union-Find peeling decoders~\cite{delfosse2021almost,delfosse2020linear} could reduce this to time linear in $|E|$; working out this construction in detail is left to future work.

\subsection{Signature-Recall Loss Inference}
\label{subsec:hc}

The greedy algorithm of Sec.~\ref{subsec:greedy} addresses the covering constraint $R_{\mathrm{obs}} \subseteq \bigcup_{q \in \hat{L}} B_q$, but does not require each
selected qubit to individually match the observed pattern closely.
In a multi-loss scenario, a qubit may be chosen simply because it covers a
few remaining uncovered checks, even if most of its associated covering set
$B_q$ lies outside $R_{\mathrm{obs}}$, introducing false positives.

We propose a complementary approach, \emph{signature-recall inference},
which retains only qubits whose single-loss signature is in strong individual
agreement with $R_{\mathrm{obs}}$.
The key ingredient is a precomputed single-loss signature table: for each
data qubit $q \in \{1, \ldots, n\}$, we simulate the circuit with a single
loss at $q$ and record the resulting random-check set under the classifier of
Sec.~\ref{subsec:classification},
\begin{equation}
    \hat{R}_q := \mathrm{Classify}\!\left(\mathrm{Syndrome}(L = \{q\}),\, W,\, \tau\right).
\end{equation}
Because only one qubit is lost, $\hat{R}_q$ captures the unambiguous,
noise-averaged signature of a single loss event.
The table $\{\hat{R}_q\}_{q=1}^n$ is computed offline and reused across trials.

Given $R_{\mathrm{obs}}$, we score each qubit by the
\emph{signature-recall} of $q$'s expected signature against the observation:
\begin{equation}
    s(q) := \frac{\lvert R_{\mathrm{obs}} \cap \hat{R}_q \rvert}
                 {\lvert \hat{R}_q \rvert},
\end{equation}
with $s(q) = 0$ if $\hat{R}_q = \varnothing$.
The score measures what fraction of $q$'s expected random checks actually
fired.
The denominator $|\hat{R}_q|$ is determined solely by the local neighbourhood of qubit $q$
(approximately $2$--$4$ for $W=1$) and is independent of the code
distance $d$.
Background Pauli noise outside $\hat{R}_q$ therefore contributes nothing to
the denominator, making the score robust across all $d$ and $p_{\mathrm{Pauli}}$
values.

\paragraph*{Analytical threshold derivation.}
When qubit $q$ is truly lost, the observed syndrome pattern matches
$\hat{R}_q$ almost exactly, so the true-positive score is $\approx 1$.
A false-positive candidate $q'$ (not lost) scores
\begin{equation}
    s_{\mathrm{FP}}(q' \mid q\text{ lost})
        = \frac{|\hat{R}_q \cap \hat{R}_{q'}|}{|\hat{R}_{q'}|},
\end{equation}
which measures the fraction of $q$'s syndrome that overlaps with $q'$'s
expected signature.
Two classes of false positives arise.

\emph{(A) Removable false positives} ($\hat{R}_{q'} \not\subseteq \hat{R}_q$):
In a $d \times d$ rotated surface code, each data qubit touches at most $4$
auxiliary qubits (bulk) or fewer (boundary/corner).
Two distinct qubits share at most $2$ auxiliary-round events in their $W=1$
signatures, and the minimum signature size of a non-subset candidate is $3$.
This caps the false-positive score at $2/3 \approx 0.667$.
Numerical verification for $d = 3, 5, 7, 9, 11$ confirms that no
non-subset pair exceeds this bound, and the bound is tight (some
edge-qubit pairs achieve exactly $2/3$).

\emph{(B) Irreducible false positives} ($\hat{R}_{q'} \subseteq \hat{R}_q$):
Boundary or corner qubits have fewer adjacent auxiliary qubits, so their entire
signature can be a proper subset of a neighbouring bulk qubit's signature.
These always score $1$ and cannot be eliminated by thresholding.
For a $d \times d$ surface code with $W=1$, exactly $4$ such irreducible
pairs exist regardless of $d$, corresponding to the $4$ corner regions of the
lattice.

The resulting score distribution is\\
$s \in [0,\, \tfrac{2}{3}] \;(\text{removable FP})$,
$s \in (\tfrac{2}{3},\, 1) \;(\text{empty gap})$,\\
$s = 1 \;(\text{TP and irreducible FP})$.
Setting the threshold just above the removable-FP ceiling eliminates all
removable false positives while retaining all true positives.
Concretely, we use
\begin{equation}
    \theta^* = \tfrac{2}{3} + \delta = 0.717,
    \qquad \delta = 0.05,
\end{equation}
and define the inferred set as
\begin{equation}
    \hat{L}_{\mathrm{hc}}
        := \bigl\{ q \in \{1, \ldots, n\} \mid s(q) \ge \theta^* \bigr\}.
\end{equation}

This threshold is analytically derived and independent of $d$, $p_{\mathrm{Pauli}}$, and the hardware noise model, because signature-recall scores depend only on the lattice geometry.

\begin{algorithm}[H]
\caption{Signature-recall heuristic for high-confidence loss inference}
\begin{algorithmic}[1]
\Require 
Observed random-check set $R_{\mathrm{obs}}$;
         signature table $\{\hat{R}_q\}_{q=1}^n$;
         threshold $\theta^* = 2/3 + \delta$ with default $\delta = 0.05$
\Ensure Inferred loss set $\hat{L}_{\mathrm{hc}}$

\State $\hat{L}_{\mathrm{hc}} \gets \varnothing$

\For{$q = 1$ \textbf{to} $n$}
    \If{$\hat{R}_q = \varnothing$}
        \State $s(q) \gets 0$
    \Else
        \State $s(q) \gets |R_{\mathrm{obs}} \cap \hat{R}_q| / |\hat{R}_q|$
    \EndIf

    \If{$s(q) \ge \theta^*$}
        \State $\hat{L}_{\mathrm{hc}} \gets \hat{L}_{\mathrm{hc}} \cup \{q\}$
    \EndIf
\EndFor
\State \Return $\hat{L}_{\mathrm{hc}}$
\end{algorithmic}
\label{alg:sig-recall}
\end{algorithm}


\section{Evaluation of heuristic loss inference}
\label{sec:eval}
We have simulated the performance of loss correction protocols P0 to P5 on the rotated surface code~\cite{bombin2007optimal} as summarized in Table~\ref{table:protocols}. P0 is the baseline of the logical error rate without any loss correction. P1 is the idealized case of applying the LDU without any error. It immediately reloads the lost qubit without any reloading latency. This corresponds to the erasure channel and gives a rough lower bound of the logical error rate. P2 is the noisy LDU reload: the LDU circuit detects losses subject to a detection delay $d_{\mathrm{LDU}}$ and measurement error probability.
P3 is the high-confidence inferred reload of Sec.~\ref{subsec:hc}: syndrome statistics are collected over $W$ rounds and only qubits with Jaccard score $s(q)\geq\theta$ are reloaded, favoring precision over recall.
P4 is the naive greedy reload: the full greedy set-cover candidate list from Algorithm~\ref{alg:greedy_loss_inference} is reloaded, achieving high recall at the cost of more false positives. P5, P4$\to$adaptive LDU, uses P4 first to detect suspicious loss errors, then applies LDUs only to them, reducing the number of LDU applications. For all protocols, we use the same decoding algorithm, which clusters the errors by Union-Find~\cite{delfosse2021almost}, which can treat both loss and Pauli errors, and then applies Gaussian elimination as a subroutine instead of the peeling algorithm~\cite{delfosse2020linear}. This is because the peeling algorithm is designed for the code capacity model and has not been generalized for the detector model and hyperedge because of error propagation. 
The Gaussian elimination on a cluster of $|L|$ erased/flagged qubits costs $O(|L|^3)$. For LDPC codes, clusters are typically $O(1)$ in size and this cost is small, but clusters can grow to $O(n)$ under high loss rates, in which case the per-shot cost can reach $O(n^3)$ in the worst case; this decoder can be replaced by a more sophisticated one with better worst-case scaling in future work.
We use the reloading plus stabilizer measurement scheme introduced in \cite{wu2022erasure, nishio2025multiplexed} for correcting erasure errors rather than the super-stabilizer-based approach~\cite{stace2009thresholds, barrett2010fault, kobayashi2025erasure}, which does not recover the original code distance.

\begin{table}[t]
\centering
\caption{Summary of loss-correction protocols.
$W$: inference window (rounds); $l$: reload latency (rounds);
$d_{\mathrm{LDU}}$: LDU detection delay (rounds).}
\begin{tabular}{cll}
\hline
\textbf{Protocol} & \textbf{Detection method} & \textbf{Reload delay} \\
\hline
P0 & No correction                              & ---                   \\
P1 & Heralded erasure                  & $l$                   \\
P2 & Noisy LDU                       & $d_{\mathrm{LDU}} + l$ \\
P3 & Signature-recall inference        & $W + l$               \\
P4 & Greedy set cover inference                  & $W + l$               \\
P5 & P4 $\to$ adaptive LDU     & $W + d_{\mathrm{LDU}} + l$ \\
\hline
\label{table:protocols}
\end{tabular}
\end{table}

\begin{figure*}[htbp]
    \centering
    \includegraphics[width=\linewidth]{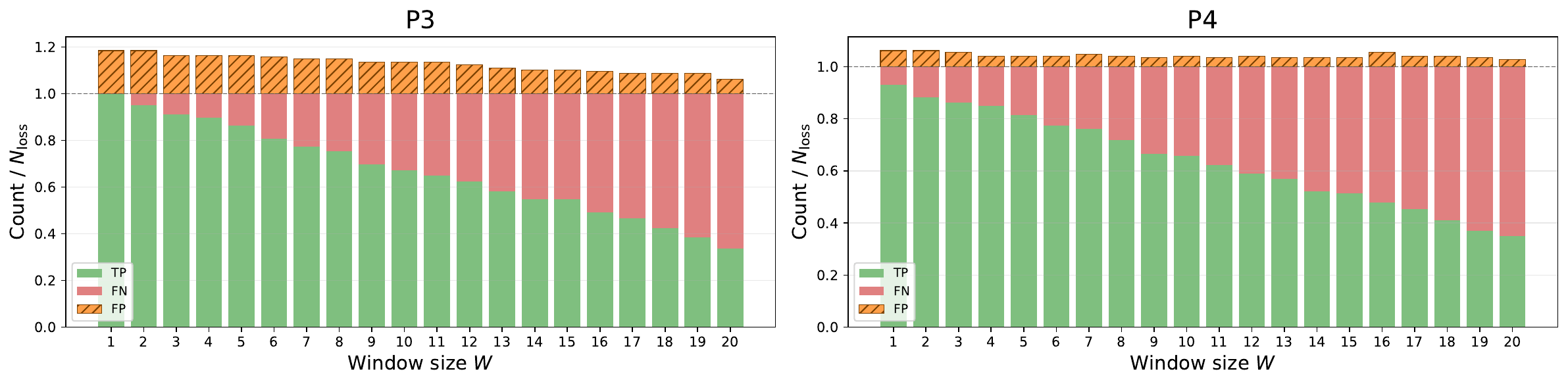}
    \caption{Inference outcome breakdown of loss inference protocols P3 and P4 versus the inference time window size $W$. Code distance $d=7$, $Cum(p_{\rm loss})=0.01$, gate error rates and SPAM error rate are based on trapped ion hardware shown in Table.~\ref{tab:realistic_noise}.}
    \label{fig:W_sweep}
\end{figure*}

\subsection{Loss inference classification performance}
\label{subsec:classification_performance}

We evaluated the classification performance of the loss inference
protocols P3 and P4.
We evaluated accuracy using a confusion matrix approach, in which the inference outcome
for each candidate qubit is classified as a true positive (TP), true negative (TN), false positive (FP), or false negative (FN). These results show that $W=1$ consistently gives the best recall, $\mathrm{TP}/(\mathrm{TP}+\mathrm{FN})$.
Because multiple loss patterns can in principle yield the same observed random-check set, the inference algorithm can underestimate the number of losses when this ambiguity arises.

We evaluated the logical error rate of the protocols based on a realistic
noise model\footnote{Each row combines error rates reported in separate experiments (and, for the neutral-atom row, on different atomic species/apparatus). No single demonstrated system has achieved all three values simultaneously, so the composite should be read as an optimistic near-term target rather than an end-to-end rate measured on one device.} as shown in Table~\ref{tab:realistic_noise}.

\begin{table}[hb]
\begin{tabular}{lllll}
\cline{1-4}
\multicolumn{1}{|l|}{}             & \multicolumn{1}{l|}{1 Qubit Gate} & \multicolumn{1}{l|}{2 Qubit Gate} & \multicolumn{1}{l|}{SPAM} &  \\ \cline{1-4}
\multicolumn{1}{|l|}{Trapped Ion}  & \multicolumn{1}{l|}{$1.5 \times 10^{-7}$~\cite{smith2025single}}   & \multicolumn{1}{l|}{$8.4\times10^{-5}$\cite{hughes2025trapped}}   & \multicolumn{1}{l|}{$5\times 10^{-6}$~\cite{sotirova2024high}}     &  \\ \cline{1-4}
\multicolumn{1}{|l|}{Neutral atom} & \multicolumn{1}{l|}{$4.7\times10^{-5}$~\cite{sheng2018high}}   & \multicolumn{1}{l|}{$1.46\times10^{-3}$~\cite{kiefer2026protected}}   & \multicolumn{1}{l|}{$1.5\times10^{-4}$~\cite{wang2025ultrafast}}     &  \\ \cline{1-4}
\end{tabular}
\caption{Error rates from state-of-the-art hardware experiments used for the logical error rate simulations.}
\label{tab:realistic_noise}
\end{table}

We first evaluated the classification performance of the loss inference
protocols P3 and P4 versus the window size $W$ as shown in Fig.~\ref{fig:W_sweep}. 
These results show that $W=1$ consistently gives the best recall, and that recall degrades roughly linearly as $W$ increases. This trend follows directly from the definition of $q_i(W)$ in Eq.~\eqref{eq:windowed-score}. Because reload is disabled in this benchmark, a check made anticommuting by a loss at round $r_{\mathrm{loss}}$ stays random for the remainder of the trace, i.e.\ over the interval
$[r_{\mathrm{loss}}, T)$ of length $T-r_{\mathrm{loss}}$. The score
$q_i(W)$ can reach its saturated value $\pi_{i,r}\approx 1/2$ only if some length-$W$ window fits entirely inside this active interval; since no admissible window may extend past round $T$, this requires $T - r_{\mathrm{loss}} \geq W$. 
When the loss occurs within the final $W-1$ rounds of the trace (i.e.\ $r_{\mathrm{loss}} > T-W$), every admissible window is forced to include some deterministic, pre-loss rounds.
Pinning the window against the trace boundary, which is the best achievable placement in that case, gives a diluted score
$$
  q_i(W) \;\approx\; \frac{\min(T-r_{\mathrm{loss}},\,W)}{W}\cdot\frac12,
$$
so detection succeeds only if $T - r_{\mathrm{loss}} \geq 2\tau W$. Since
$r_{\mathrm{loss}}$ is approximately uniformly distributed over the
$T$-round trace, the resulting recall follows
$$
  \mathrm{recall}(W) \;\approx\; \max\!\left(0,\; 1 - \frac{2\tau W}{T}\right),
$$
which for $d=7$ ($T=21$, $\tau=0.35$) predicts $0.97$, $0.67$, and $0.33$ at
$W=1$, $10$, $20$ respectively — closely matching the observed P3 recall of
$1.00$, $0.65$, and $0.32$ in Fig.~\ref{fig:W_sweep}. $W=1$ is immune to this effect
because a single-round window is never forced to straddle the loss onset,
so it always attains the peak flip rate regardless of when the loss
occurred within the trace.

The false-positive count shrinks slightly with $W$ for a complementary reason: spurious threshold crossings driven by background noise are smoothed out by the same averaging that dilutes genuine signal. Because the background error rates used here are far below $\tau$, this effect is much weaker than the recall loss, so overall classification performance is best at $W=1$ throughout Sec.~\ref{sec:eval}. Because multiple loss patterns can in principle yield the same observed random-check set, the inference algorithm can also underestimate the
number of losses when this ambiguity arises, independent of the effect
above.

Next, Fig.~\ref{fig:cm_d_sweep} breaks down the inference outcomes as a function of code distance $d$. The experiment fixes the cumulative per-qubit loss probability over the $3d$-round circuit,
$$
  \mathrm{Cum}(p_{\mathrm{loss}}) \;:=\; 1 - (1-p_{\mathrm{loss}})^{3d} \;=\; 0.01,
$$
where $p_{\mathrm{loss}}$ is the per-round loss probability.
So the expected number of concurrent losses is
$\mathbb{E}[N_{\mathrm{loss}}] \approx \mathrm{Cum}(p_{\mathrm{loss}})\, d^2$,
growing quadratically with code distance.

P3 accepts a candidate qubit $q$ using the \emph{signature-recall} score of
Sec.~\ref{subsec:hc},
$$
  s(q) = \frac{|R_{\mathrm{obs}} \cap \hat R_q|}{|\hat R_q|} \;\ge\; \theta^\ast = 0.717.
$$
Because the denominator $|\hat R_q|$
depends only on the local neighbourhood of $q$, $s(q)$ for a truly lost qubit stays close to $1$ regardless of how many
\emph{other} qubits are simultaneously lost, as long as $\hat R_q \subseteq
R_{\mathrm{obs}}$. It holds unless the anticommutation contributions of the other losses happen to cancel exactly on one of $q$'s own checks, an event we do not observe at the loss rates considered here. Table~\ref{tab:p3p4-recall-precision} confirms this: P3's recall is $1.000$ at every tested $d$, from $d=3$ ($\mathbb{E}[N_{\mathrm{loss}}]
\approx 0.09$) through $d=9$ ($\mathbb{E}[N_{\mathrm{loss}}] \approx 0.75$).

\begin{figure*}[htbp]
    \centering
    \includegraphics[width=0.8\linewidth]{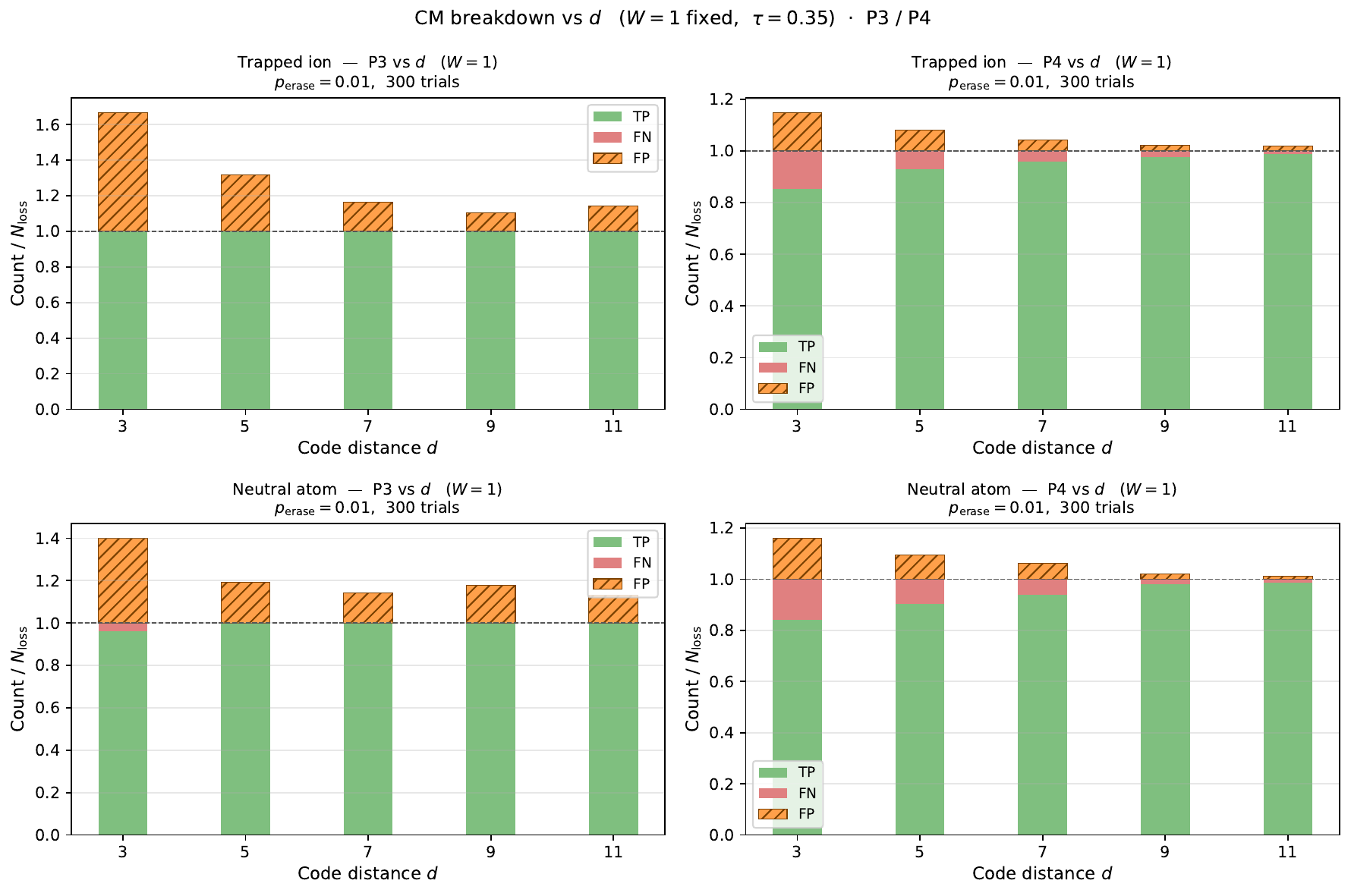}
    \caption{Inference outcome breakdown of loss inference protocols P3 (left) and P4 (right) versus the code distance $d$. It is using the trapped ion-based 1Q/2Q/SPAM errors and $Cum(p_{\rm loss})=0.01$. $W=1$, $300$ trials.}
    \label{fig:cm_d_sweep}
\end{figure*}

\begin{figure*}[htbp]
    \centering
    \includegraphics[width=\linewidth]{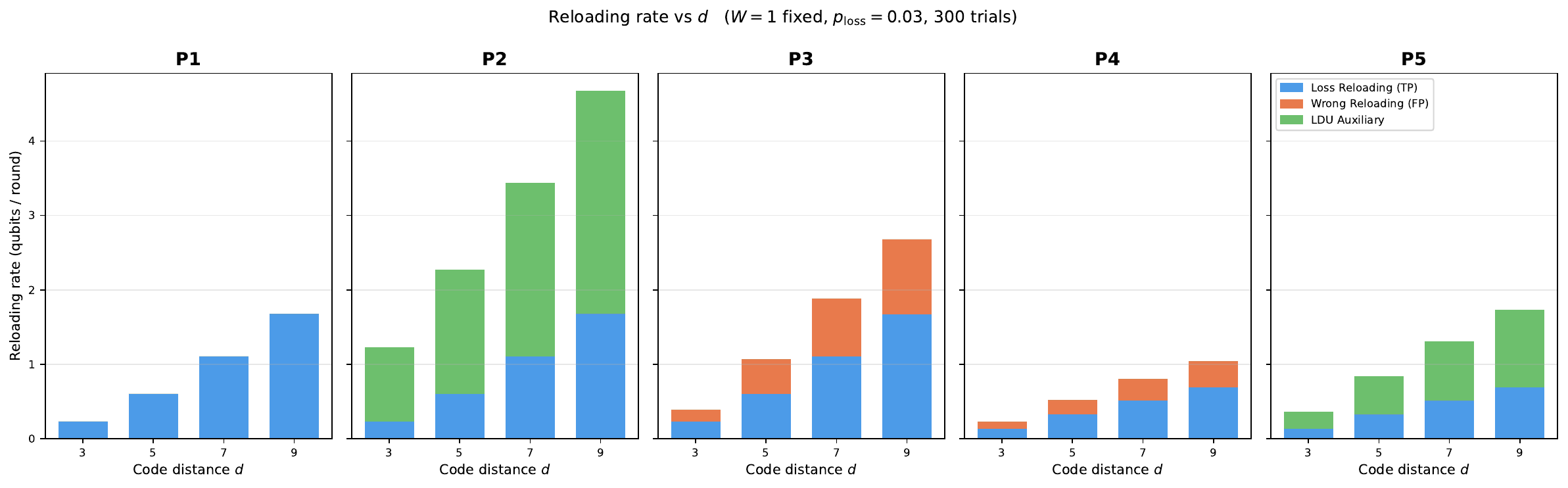}
    \caption{Qubit consumption rate for protocol P1 to P5. The blue stack shows the reloading overhead, and the green shows the auxiliary qubits required for LDUs. The orange one is loss inference-specific overhead, which is additional reloading based on false positive decisions. $p_{\rm loss}  = 0.03$, $p_{\rm pauli} = 1e-3$}
    \label{fig:reload_ldu_rate}
\end{figure*}

\begin{table}[htbp]
\centering
\caption{P3 and P4 inference accuracy versus code distance $d$, trapped-ion
hardware parameters (Table~\ref{tab:realistic_noise}), $Cum(p_{loss})=0.01$, $W=1$, $\tau=0.35$, 300 trials per point.}
\label{tab:p3p4-recall-precision}
\begin{tabular}{ccccc}
\hline\hline
$d$ & $\mathbb{E}[N_{\mathrm{loss}}]$ & P3 recall / precision & P4 recall / precision \\
\hline
3 & 0.09 & 1.000 / 0.600 & 0.852 / 0.852 \\
5 & 0.33 & 1.000 / 0.760 & 0.929 / 0.919 \\
7 & 0.56 & 1.000 / 0.861 & 0.958 / 0.958 \\
9 & 0.75 & 1.000 / 0.907 & 0.978 / 0.978 \\
\hline\hline
\end{tabular}
\end{table}

What does change with $d$ is P3's \emph{precision}, which rises from $0.600$
at $d=3$ to $0.907$ at $d=9$. This trend has a direct explanation in terms of
the two false-positive classes already identified in
Sec.~\ref{subsec:hc}: for a fixed hardware noise level, the \emph{count} of false-positive candidates per trial is dominated by a small, essentially $d$-independent number of irreducible or near-irreducible overlap events (Sec.~\ref{subsec:hc} identifies exactly four irreducible-overlap qubit pairs per lattice, regardless of $d$), whereas the number of true positives $\mathrm{TP}=\mathbb{E}[N_{\mathrm{loss}}]$ grows as $\mathrm{Cum}(p_{\mathrm{loss}})\, d^2$. A roughly constant false-positive count divided by a quadratically growing true-positive count drives precision toward $1$ as
$d$ increases, without any change in P3's recall.

P4, by contrast, uses the greedy set-cover heuristic of
Sec.~\ref{subsec:greedy}, which has no hard score threshold: it
iteratively selects the candidate whose signature covers the most
unexplained random checks, removes that signature from the residual, and
repeats. Table~\ref{tab:p3p4-recall-precision} shows P4's recall rising from
$0.852$ at $d=3$ to $0.978$ at $d=9$, tracking its precision almost exactly
at every $d$ — i.e.\ P4's false-negative and false-positive counts remain
comparable in magnitude as $d$ grows, unlike P3, which trades a
recall-favouring threshold for a shrinking but nonzero excess of false
positives at small $d$.

In summary, P3 and P4 do not diverge as $d$ grows: both converge toward
high-accuracy inference, but along different paths. P3, thresholded on
signature-recall, never misses a genuine loss in the regime tested here
(recall $\equiv 1$) and pays for this with extra reload operations that
become negligible, in relative terms, once $d$ is large enough that
$N_{\mathrm{loss}}$ dominates the small, roughly constant false-positive
count. P4 trades a small amount of recall at low $d$ for higher precision,
and gains recall as concurrent losses provide more structure for the greedy
decomposition to exploit, so that by $d=9$ its recall and precision are both
within a few percent of P3's.

These classification differences translate directly into qubit reload demand. Fig.~\ref{fig:reload_ldu_rate} compares the qubit consumption\footnote{This rate assumes that one can only apply destructive measurement. For the systems capable of non-destructive measurement, the LDU auxiliary might not require new qubits.} per code cycle for protocols P1–P5 as a function of the code distance $d$.
The qubit consumption is decomposed into three contributions: qubit reloading after actual loss events, unnecessary reloading caused by false-positive loss inference, and auxiliary qubits required for LDUs. 
As expected, LDU-based protocols (P2 and P5) require a substantial number of auxiliary qubits, and this overhead increases with the code distance. In contrast, the inference-based protocols (P3 and P4) eliminate the need for LDU auxiliary qubits entirely. Although false-positive inference introduces additional unnecessary reloads, their contribution remains significantly smaller than the auxiliary-qubit overhead required by LDU-based approaches under the simulated conditions. These results indicate that loss inference can substantially reduce the qubit resource requirements while maintaining comparable functionality, which is evaluated later.

\subsection{Logical error rate performance}
\label{subsec:logical_error_rate_pauli_loss_only}

We next evaluate how the accuracy of loss inference translates into
logical-level performance. Fig.~\ref{fig:LER_loss} and \ref{fig:LER_pauli} show the logical error rate
(LER) of protocols P0--P5 as functions of the physical loss probability per round
and the circuit-level Pauli error probability, respectively. 
The LER is evaluated after $3d$ rounds of syndrome extraction.

Fig.~\ref{fig:LER_loss} isolates the effect of loss errors by setting the background
Pauli error probability to a negligibly small value $10^{-10}$. 
Without loss mitigation (P0), the logical error rate rapidly increases as the loss probability grows. The ideal erasure benchmark (P1) achieves the lowest logical error rate over the entire range. Among practical protocols P2-P5, P2 outperforms the others for a high loss rate more than $10^{-2}$ whereas $P3$ and $P4$ outperform P2/P5 for a smaller loss rate region.
This indicates that avoiding noisy LDU operations can compensate for the imperfect accuracy of loss inference. At sufficiently high loss rates, the performance of all inference-based protocols gradually worse than the LDU-based protocols as multiple simultaneous loss events become increasingly difficult to infer correctly.

\begin{figure}[htbp]
    \centering
    \includegraphics[width=\linewidth]{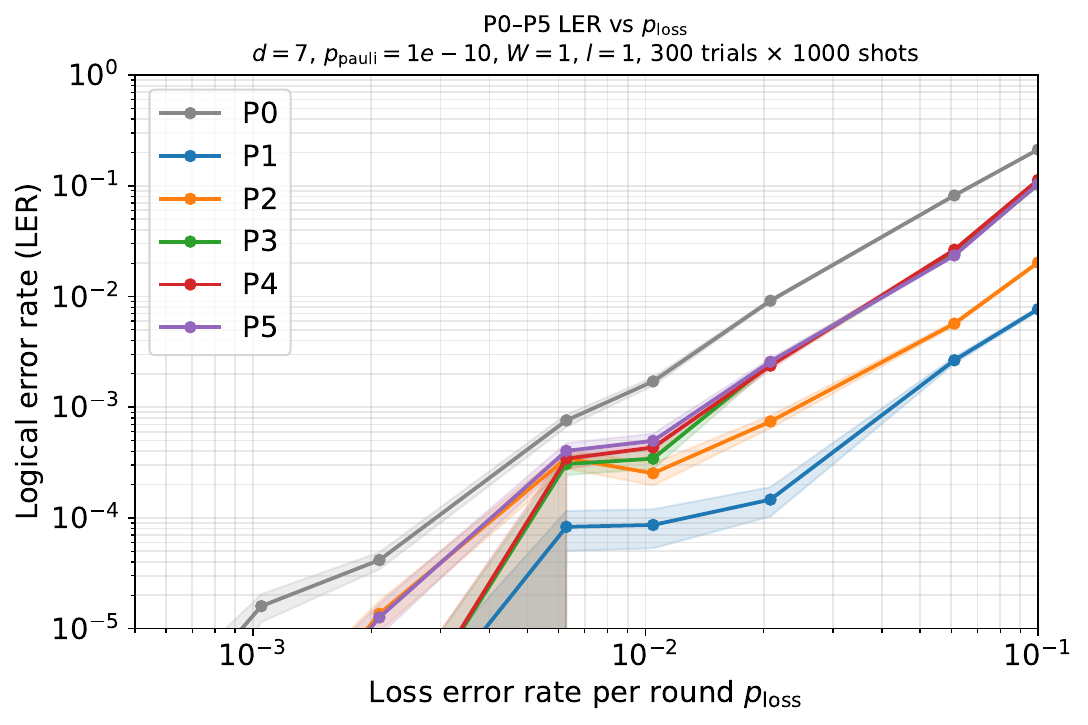}
    \caption{Logical error rate versus the loss error rate with the fixed $p_{\rm pauli} = 10^{-10}$. 
    }
    \label{fig:LER_loss}
\end{figure}

Fig.~\ref{fig:LER_pauli} plots the logical error rate as a function of the physical Pauli error rate while keeping the loss probability $0$. In this regime, all protocols exhibit nearly identical performance because loss events are extremely rare and Pauli errors dominate the logical failure probability. Consequently, the advantage of loss inference becomes negligible, and improving the underlying gate fidelity remains the primary requirement for reducing logical errors. This result confirms that the proposed inference protocol introduces essentially no performance degradation when loss errors are absent.

\begin{figure}[htbp]
    \centering
    \includegraphics[width=\linewidth]{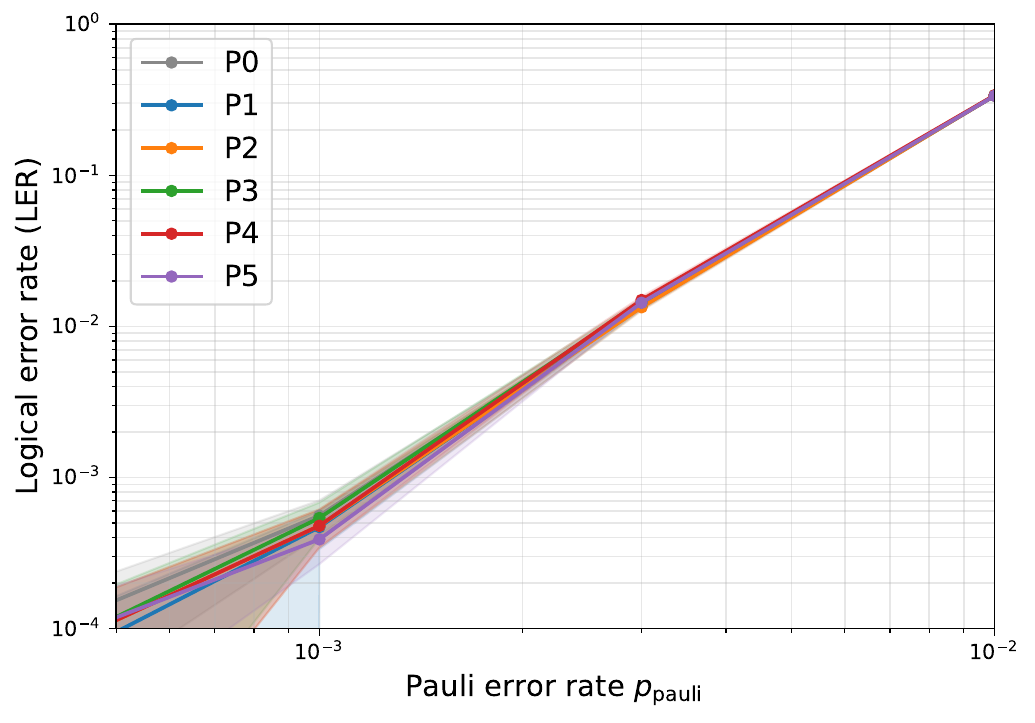}
    \caption{Logical error rate of protocols versus the Pauli error rate for the case without loss error. $d=7$, $W=1$, $l=1$, $10^{5}$ shots per point.}
    \label{fig:LER_pauli}
\end{figure}

\begin{figure*}[htbp]
\subfloat{%
  \includegraphics[clip,width=\columnwidth]{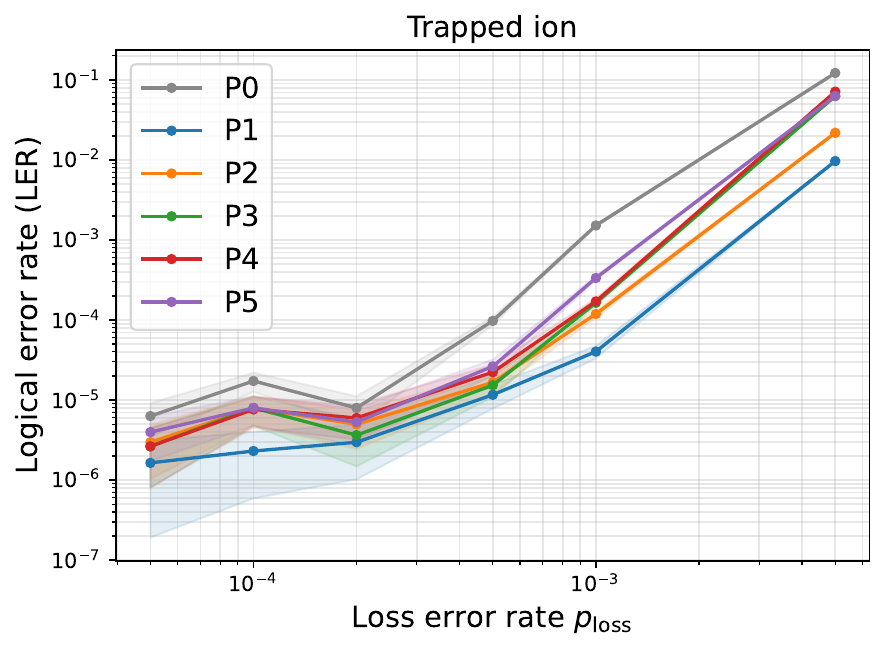}
}
\subfloat{%
  \includegraphics[clip,width=\columnwidth]{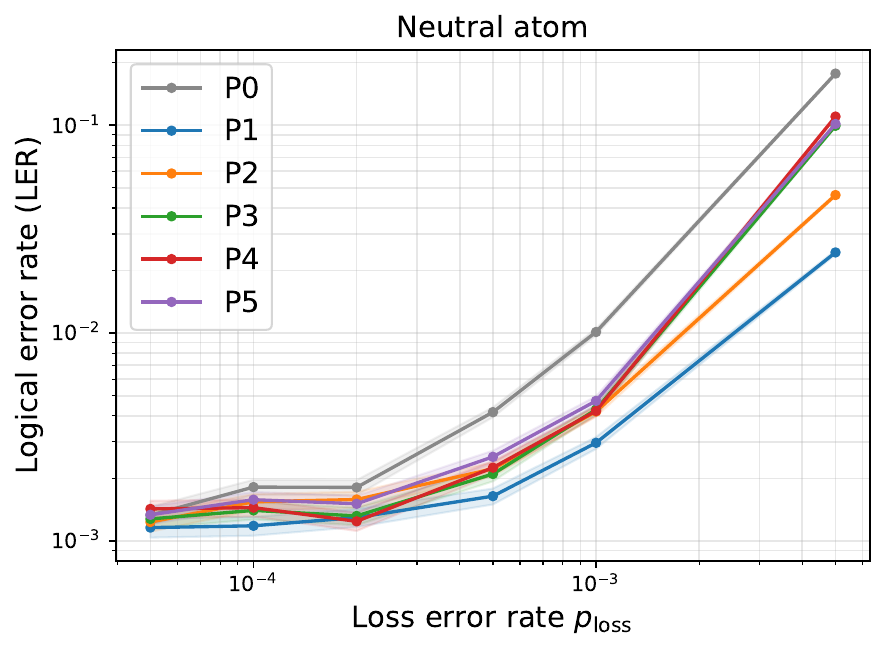}
}
\caption{Logical error rates of protocols versus the loss error rate with a realistic noise model for the trapped ion and the neutral atom. $d=7$, $300$ trials for loss errors and $10000/1000$ shots for ions/atoms.}
\label{fig:realistic}
\end{figure*}

\subsection{LER on realistic error rates}

Fig.~\ref{fig:realistic} compares the logical error rates of protocols P0–P5 under the trapped-ion and neutral-atom noise parameters listed in Table~\ref{tab:realistic_noise}. 

On both platforms, P0 worsens far faster than a linear scaling with $p_{\rm loss}$ would predict, reflecting the rising probability of multiple simultaneous, uncorrected losses when no mitigation is applied.

Up to roughly $p_{\rm loss} \approx 5e-4$, the inference-based protocols (P3/P4) match or slightly outperform the noisy-LDU protocol (P2). In this regime, statistical inference from syndrome data alone achieves LDU-level (or better) loss mitigation without requiring a physical Leakage Detection Unit with space-time overheads. This advantage narrows and then reverses as $p_{\rm loss}$ rises further. The two become roughly tied around $p_{\rm loss}\approx 1e-3$, and by $p_{\rm loss} \approx 5e-3$, P2 pulls ahead of P3/P4 by roughly a factor of 2 on both platforms. This is a favorable result for the proposed inference-based (LDU-free) approach in the realistic, low-to-moderate loss-rate regime, but it also marks a clear validity boundary.

The crossover occurs at nearly the same $p_{\rm loss}$ on both hardware platforms, which suggests it is a structural consequence of the inference window ($W = 1$) rather than a hardware-specific effect. At low $p_{\rm loss}$, individual loss events are well-separated in time and the syndrome signature within a single window unambiguously identifies the lost qubit, so inference matches LDU performance without the LDU overhead. As $p_{\rm loss}$ grows, however, the probability of multiple losses overlapping within a single window increases, degrading the signature-based disambiguation used by high-confidence filtering and greedy set cover — while LDU-based detection (P2), which measures loss directly, is unaffected by this ambiguity and remains robust.

P5 (adaptive LDU) sits between P2 and P3/P4 on both platforms at high $p_{\rm loss}$, suggesting the adaptive triggering partially inherits the benefits of both underlying strategies.

For the case of trapped ion, the baseline (P0) failure rate is extremely low at small $p_{\rm loss}$ ($~1e-6$) but degrades dramatically once $p_{\rm loss}$ exceeds $~1e-3$, rising roughly 80-fold (0.00152 → 0.122) for only a 5× increase in loss rate (1e-3 → 5e-3). This steep, super-linear blow-up indicates that once qubit loss becomes non-negligible, uncorrected multi-loss events start dominating logical failures far out of proportion to their probability — leaving essentially no safety margin without mitigation at higher loss rates, even though the underlying gate error rate is low.

It also shows that P3 slightly beats P2 at $p_{\rm loss}=5e-4$, but by for $p_{\rm loss}=1e-3$ P2 is clearly ahead, widening to nearly 3× by $p_{\rm loss}=5e-3$ ($P2 = 0.0218$ vs. $P3 = 0.0629$, $P4 = 0.0714$).

Neutral-atom data alone show a platform where LERs are uniformly higher across the board (roughly 0.0012 at the smallest $p_{\rm loss}$ up to 0.18 at the largest), consistent with its higher intrinsic gate error rate, but the baseline's degradation with $p_{\rm loss}$ is comparatively less explosive than what one sees in the ion data — P0 rises about 17× (0.0101 → 0.177) for a 5× increase in $p_{\rm loss}$, versus the oracle P1's more modest ~8× rise (0.0030 → 0.0244). Because failure counts are larger here even at low $p_{\rm loss}$, the neutral-atom numbers are less shot-noise-limited, making trends across the full $p_{\rm loss}$ range more reliable to read at face value.

The neutral-atom data also show the inference protocols retaining their edge over P2 slightly longer: P3 still leads at $p_{\rm loss}$ = 5e-4 (0.00210 vs. 0.00222), the two are essentially tied by $p_{\rm loss}$ = 1e-3 (0.00429 vs. 0.00419), and only at $p_{\rm loss}$ = 5e-3 does P2 pull decisively ahead (0.0461 vs. 0.0994 for P3). Taken on its own, this dataset suggests that for a noisier physical platform, statistical inference of loss location remains competitive with physical LDU detection over a broader loss-rate window before high-loss ambiguity finally erodes its advantage.

\section{Conclusion}
\label{sec:conclusion}

In this work, we established a criterion for \emph{detectable} qubit loss in stabilizer measurements without leakage-detection units. Our central theoretical observation is that, after puncturing by loss, stabilizer generators that were originally commuting may become effectively anticommuting on the surviving qubits. This loss-induced anticommutation provides an operational signature of detectable loss, because it leads to random measurement outcomes under repeated syndrome extraction.

Based on this condition, we showed how qubit loss can be inferred directly from the set of stabilizer checks exhibiting random outcomes. In this sense, the detectable-loss condition is the foundation that connects the algebraic effect of puncturing to an experimentally accessible syndrome signature.

We then formulated the exact and maximum-likelihood loss inference problems for general stabilizer codes in terms of the observed random-check set. This provides a general framework for loss inference at the level of stabilizer measurements, without requiring dedicated loss-detection hardware.

These results clarify the theoretical role of loss-induced anticommutation in stabilizer measurements and show that it can be turned into an inference procedure, which may result in the reduction of space-time overhead for fault-tolerant quantum computation.

We numerically validated this framework on the rotated surface code under realistic circuit-level noise models for trapped-ion and neutral-atom platforms, comparing inference-based reload against an oracle upper bound, a noisy-LDU reload baseline, and a no-correction baseline. The inference-based protocol matches or exceeds the performance of a physical leakage-detection unit in the low-to-moderate loss-rate regime relevant to near-term hardware, confirming that loss mitigation can be recovered from syndrome data alone without dedicated loss-detection hardware.

An important direction for future work is to extend the present framework beyond idealized random /deterministic classification and combine an inference algorithm with a decoder to optimize the whole correction process. Another direction is to generalize the noise model to account for leakage error and auxiliary loss, which does not satisfy the assumption on the two-qubit gate behavior, such as superconducting qubit systems~\cite{mehta2025bias}. As discussed in Appendix~\ref{app:hook_error}, the same non-entangling assumption also modifies hook-error propagation during syndrome extraction, suggesting that loss-dependent fault propagation may be another important ingredient in future circuit-level decoding strategies. 
The auxiliary loss induces an uncertain syndrome outcome that may disturb our statistical approach.
As future work, it would be interesting to extend this study by applying the distinction of auxiliary loss through SSR.
It is also interesting to specify the inference-friendly code family that has less ambiguity on the randomized syndrome for different loss patterns and Pauli errors, which may result in higher inference accuracy. Another important direction is to move beyond the offline classification framework adopted in this work. The present simulations use an offline classifier based on repeated-shot statistics. Developing a single-shot online inference algorithm suitable for real-time fault-tolerant operation remains an important direction for future work.\\

\section*{Acknowledgment}
SN proposed the concept of detectable loss condition and loss inference, implemented all simulation flows, and prepared the manuscript. TU contributed to the discussion on the formulation of the relaxed loss inference problem as the set cover problem. FK contributed to the discussion on the noise model assumption on neutral atom devices. TS contributed to the discussion on the feasibility of the protocols. DB contributed to the gauge operator style definition on the punctured checks and supervised the project.

SN would like to thank Yasunari Suzuki, Yosuke Ueno, Shuwen Kan, George Umbrarescu, and William John Munro for useful discussions. SN would like to thank Koichiro Miyanishi, Aoi Hayashi, and Takahiro Tsunoda for discussions regarding the assumption underlying noise models for hardware platforms. SN would like to thank members of the ``Software, Algorithms, and (Circuit) Optimization for Efficient Realization of Novel Device Architectures'' group for their invaluable advice regarding the formulation of the loss inference problem. 
This project was supported by the JST Moonshot R\&D Grant Number JPMJMS256G. 
SN acknowledges support from a JSPS Overseas Research Fellowship.
FK acknowledges support from a JSPS KAKENHI Grant Number 25KJ0445.
\bibliographystyle{unsrt}
\bibliography{main.bib}
\appendix
\section{Loss-Induced Changes in Hook-Error Propagation}
\label{app:hook_error}
In the main text, we focused on the non-deterministic syndrome outcomes induced by punctured stabilizer checks. Qubit loss can also modify the propagation of Pauli errors through a syndrome-extraction circuit. In this appendix, we show that a loss occurring before syndrome extraction can change the support and orientation of a hook error, and consequently change whether the resulting correlated data-qubit error reduces the effective code distance. Note that Ref.~\cite{liu2026achieving} also addresses hook errors induced by loss errors. In this appendix, however, we focus on the effect of loss on hook errors arising from Pauli errors on auxiliary qubits, rather than on hook errors directly caused by loss errors.

A hook error is a correlated data-qubit error caused by the propagation of a fault on a syndrome auxiliary qubit through the sequence of CX gates used for syndrome extraction. Depending on the gate ordering, a single fault on an auxiliary qubit can propagate to multiple neighbouring data qubits. Figure~\ref{fig:hook_example} illustrates such a propagation in a rotated surface code. Typically for surface codes, only hook errors propagating to the last two data qubits are considered harmful because errors propagating to three or four data qubits are equivalent, up to the measured stabilizer, to lower-weight errors (one and zero) and therefore do not reduce the effective code distance. Consequently, for each stabilizer type $X$ and $Z$, the last two data qubits in the CX sequence should be chosen so that the resulting weight-two hook error is oriented perpendicular to the minimum-weight logical operator of the corresponding Pauli type, thereby preventing the hook error from reducing the effective code distance.

\begin{figure}[htbp]
    \centering
    \includegraphics[width=\linewidth]{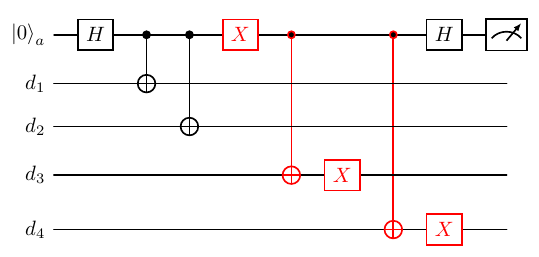}
    \caption{A hook error on a loss-free $XXXX$ syndrome extraction circuit. An $X$ fault on the auxiliary occurring between the second and third CX gates propagates to $d_3$ and $d_4$ through the red-highlighted CX gates, producing a weight-two hook error.}
    \label{fig:hook_example}
\end{figure}

Suppose that the third data qubit is lost before syndrome extraction. Under Assumption~\ref{assumption_noise}, the interaction between the auxiliary and the lost qubit becomes non-entangling and therefore no longer propagates Pauli errors. Consequently, an auxiliary fault occurring at the same time and location in the circuit propagates only through the remaining CX gates, resulting in $X_4$. It therefore reduces the effect of the same hook error pattern.

However, it increases the effect of a different pattern of a hook error. The auxiliary $X$ error before the second CX gate would propagate to the second, third, and fourth data qubits for the case of a loss-free circuit. In the presence of loss error before the second CX gate, now it introduce hook error on $d_2$ and $d_4$ as illustrated in Fig.~\ref{fig:hook_loss}.

\begin{figure}[htbp]
    \centering
    \includegraphics[width=\linewidth]{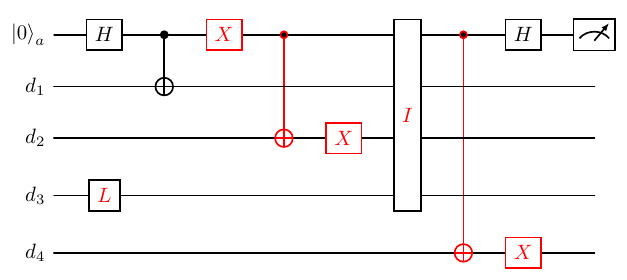}
    \caption{If $d_3$ is lost before syndrome extraction, the corresponding interaction becomes non-entangling. An auxiliary fault occurring before the second CX gate therefore propagates to $d_2$ and $d_4$, instead of propagating to $d_2$, $d_3$, and $d_4$. In the loss-free circuit, $d_2 d_3 d_4$ was equivalent to a single-qubit error on $d_1$ up to stabilizer and is therefore not so harmless. After the loss event, however, the resulting hook error is $d_2 d_4$, which is a weight-two error and may reduce the effective code distance.}
    \label{fig:hook_loss}
\end{figure}

As a consequence, qubit loss can qualitatively change the effective propagation of hook errors. A correlated error that is stabilizer-equivalent to a single-qubit error in the loss-free circuit may instead become a genuine weight-two hook error after a loss event. Conversely, the removal of a propagation path may also suppress a harmful hook error, depending on the loss location.

This observation highlights another way in which qubit loss modifies the effective syndrome-extraction circuit. While the main text focuses on the resulting punctured stabilizer measurements, qubit loss can also alter the propagation of correlated Pauli errors.

The previous example demonstrates that qubit loss can modify hook-error propagation.
We now characterize more generally which lost interactions can affect the propagated support of an auxiliary fault. As we show below, only interactions that would otherwise propagate the fault can modify the resulting hook error, whereas losses occurring before the fault location have no effect.

Consider the syndrome-extraction circuit for an arbitrary stabilizer generator $s=P_1P_2\cdots P_w,$
where the auxiliary qubit sequentially interacts with data qubits $d_1,d_2,\ldots,d_w$
through a prescribed sequence of CX gates.
Suppose that an auxiliary Pauli fault occurs immediately after the $k$-th interaction.
Under the standard propagation rules, the fault propagates only through the remaining interactions,
so that the propagated support is
\begin{equation}
H_k=\{d_{k+1},\ldots,d_w\}.    
\end{equation}

Now suppose that a subset of the data qubits
$L\subseteq\{d_1,\ldots,d_w\}$
is lost before syndrome extraction.
Under Assumption~\ref{assumption_noise}, every interaction involving a lost qubit becomes non-entangling and therefore no longer contributes to fault propagation.
The propagated support is consequently modified to
\begin{equation}
H_k(L)=H_k\setminus L.    
\end{equation}

It follows immediately that the hook-error support is unchanged if and only if
\begin{equation}
L\cap H_k=\varnothing .  
\end{equation}

Equivalently, only the removal of interactions belonging to the propagated suffix $H_k$ can modify the resulting hook error.
Losses occurring before the fault location do not affect its propagation.

A modification of the propagated support does not necessarily change whether a hook error is harmful in general stabilizer codes.

We now specialize the discussion to the rotated surface code, where the harmfulness of a hook error depends not only on its weight but also on its orientation with respect to the minimum-weight logical operators.
We therefore ask under what conditions qubit loss changes this orientation.

Let $\mathcal O(H_k)$
denote the orientation of the propagated support with respect to the corresponding minimum-weight logical operator.
A loss pattern $L$ changes the hook orientation whenever
\begin{equation}
\mathcal O(H_k(L)) \neq \mathcal O(H_k).
\end{equation}
In particular, this requires
$L\cap H_k\neq\varnothing$, since otherwise $H_k(L)=H_k$.

For weight-four stabilizer measurements in the rotated surface code, the propagated suffix consists of the final two interactions. Removing either of these interactions may change the orientation of the resulting weight-two correlated error, whereas removing only earlier interactions leaves the hook orientation unchanged.

Because the numerical simulations in Sec.~\ref{sec:eval} are performed under a circuit-level Pauli noise model, this modification of hook-error propagation is naturally included in the simulated logical error rates. However, its contribution is not isolated from the other effects of qubit loss, such as punctured stabilizer measurements and the resulting loss inference, and a quantitative analysis of its impact is left for future work.

If loss locations can be inferred, the corresponding modification of hook-error propagation could in principle be incorporated into the decoder by adapting its circuit-level fault model. Whether such a loss-conditioned decoder provides a practical advantage remains an interesting direction for future work.

\section{Measurement-Operational Knill--Laflamme Conditions for Loss Inference}
\label{app:operational-kl}
In this appendix, we formulate an operational version of the Knill--Laflamme perspective adapted to loss inference from stabilizer-measurement syndrome data. The purpose is not to replace the standard quantum error-correction condition, but to separate three logically distinct notions: detecting that some loss has occurred, identifying which loss has occurred, and recovering the encoded information after that loss.

Let $\Pi$ denote the projector onto the code space \(\mathcal C\).
For a family of error maps \(\{\mathcal E_a\}_{a\in\mathcal A}\),
the standard Knill--Laflamme condition~\cite{knill1997theory} for correctability can be
written, for Kraus operators \(E_{a,\mu}\), as
\begin{equation}
    \Pi E_{a,\mu}^{\dagger} E_{b,\nu} \Pi
    =
    c_{a\mu,b\nu} \Pi .
\end{equation}
This condition is a statement about the preservation of the encoded
quantum information: after the error, there exists a recovery map that
restores the logical state without learning any logical information.

In the present work, however, the relevant question is different. We
do not assume direct access to a loss flag or to the outcome of a
leakage-detection unit. Instead, after a possible loss event, we
repeatedly measure stabilizer checks and observe a classical syndrome
record. Thus, the first question is not whether the loss is already
correctable, but whether the available measurement procedure produces
classical statistics that distinguish the loss process from the
loss-free process.

We model the repeated syndrome-extraction procedure by a quantum
instrument
\begin{equation}
    \{\mathcal M_y\}_{y\in\mathcal Y},
\end{equation}
where \(y\) denotes the full classical measurement record. For an
error process $\mathcal E_a$, define
\begin{equation}
    p_a(y|\rho)
    :=
    \operatorname{Tr}
    \left[
        \mathcal M_y\!\left(\mathcal E_a(\rho)\right)
    \right],
\end{equation}
for any code state $\rho=\Pi\rho \Pi$.

\begin{definition}[Operational detectability]
\label{def:operational-detectability}
An error process \(\mathcal E_a\) is operationally detectable against
the no-error process \(\mathcal E_I\), with respect to the measurement
instrument $\{\mathcal M_y\}_{y\in\mathcal Y}$, if there exists a
code state $\rho=\Pi\rho \Pi$ such that
\begin{equation}
    p_a(\cdot|\rho) \neq p_I(\cdot|\rho).
\end{equation}
Equivalently, there exists an outcome \(y\in\mathcal Y\) for which
\begin{equation}
    p_a(y|\rho) \neq p_I(y|\rho).
\end{equation}
\end{definition}

\begin{definition}[Operational identifiability]
\label{def:operational-identifiability}
Two error processes \(\mathcal E_a\) and \(\mathcal E_b\) are
operationally identifiable, with respect to
\(\{\mathcal M_y\}_{y\in\mathcal Y}\), if there exists a code state
\(\rho=\Pi\rho \Pi\) such that
\begin{equation}
    p_a(\cdot|\rho) \neq p_b(\cdot|\rho).
\end{equation}
If
\begin{equation}
    p_a(\cdot|\rho) = p_b(\cdot|\rho)
\end{equation}
for all code states \(\rho=\Pi\rho \Pi\), then the two error processes are
observationally equivalent under the chosen measurement procedure.
\end{definition}

\begin{definition}[Operational recoverability after measurement]
\label{def:operational-recoverability}
A family of error processes \(\{\mathcal E_a\}_{a\in\mathcal A}\) is
operationally recoverable with respect to the measurement instrument
\(\{\mathcal M_y\}_{y\in\mathcal Y}\) if there exists a collection of
recovery maps \(\{\mathcal R_y\}_{y\in\mathcal Y}\), depending only on
the observed classical outcome \(y\), such that for every
\(a\in\mathcal A\) and every code state \(\rho=P\rho P\),
\begin{equation}
    \sum_{y\in\mathcal Y}
    \mathcal R_y
    \left(
        \mathcal M_y\!\left(\mathcal E_a(\rho)\right)
    \right)
    =
    \rho,
\end{equation}
up to the intended logical identity channel.
\end{definition}

These definitions give the hierarchy relevant to this work.
Recoverability is the strongest requirement, because it asks whether
the encoded information can be restored. Identifiability asks whether
distinct physical loss patterns can be distinguished from the
classical measurement record. Detectability is weaker: it asks only
whether a loss event can be distinguished from the loss-free case.

The following proposition connects the algebraic condition used in the
main text to the operational notion of detectability defined above.

\begin{proposition}[Anticommuting punctured checks imply operational detectability]
\label{prop:anticommuting-punctured-detectability}
Let $\Pi$ be the projector onto a stabilizer code space, and let $L$
be a loss pattern. Suppose that, under the post-loss
syndrome-extraction instrument, two stabilizer generators $s_i$ and
\(s_j\) are effectively measured as punctured Pauli operators
\(s_i'\) and \(s_j'\). Assume that the repeated syndrome-extraction
record contains measurements of both effective checks under the same
post-loss effective-check model. If
\begin{equation}
    s_i' s_j' = - s_j' s_i',
\end{equation}
then the loss process associated with \(L\) is operationally
detectable against the loss-free process with respect to repeated
stabilizer measurement.
\end{proposition}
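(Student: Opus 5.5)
Our plan is to argue by contradiction. We exhibit an outcome whose probability must differ between the loss-free and post-loss syndrome records. The loss-free side is rigid: for any code state $\rho=\Pi\rho\Pi$, every generator is stabilized. Repeated extraction under the loss-free instrument therefore outputs $+1$ for $s_i$ and $s_j$ in every round, and every detector bit $o_{i,r}$ and $o_{j,r}$ equals $0$ with probability one. So $p_I(\cdot|\rho)$ puts all its weight on records in which the $s_i$ and $s_j$ entries are constant across rounds. To establish detectability in the sense of Definition~\ref{def:operational-detectability}, it therefore suffices to find one code state $\rho$ for which $p_L(\cdot|\rho)$ gives nonzero probability to a record in which $s_i$ or $s_j$ changes value between rounds.

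Next we describe the post-loss record. By hypothesis the instrument measures $s_i'$ and $s_j'$ under a fixed effective-check model. The sign and local Pauli frame from Assumption~\ref{assumption_noise} only conjugate the measured operators by Paulis, and such a conjugation preserves the relation $s_i's_j'=-s_j's_i'$. Suppose $s_i'$ is measured in some round and $s_j'$ in a later round. The post-measurement state after the $s_i'$ outcome is an eigenstate of $s_i'$. By Lemma~\ref{lem:anticommuting_nondeterministic}, this state is not an eigenstate of $s_j'$. More quantitatively, if $P_\pm=(I\pm s_i')/2$, then $s_j'P_\pm=P_\mp s_j'$, so $\mathrm{Tr}[s_j'P_\pm\sigma P_\pm]=0$ for any state $\sigma$. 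Hence the $s_j'$ outcome is uniformly random, with probability $1/2$ for each value. The checks measured between the two rounds commute with $s_j'$, or are themselves Pauli measurements; we must handle them as well.

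The main obstacle is the intermediate checks, together with the Pauli channels of Assumption~\ref{assumption_noise}. A Pauli channel maps $\sigma$ to a mixture of $P\sigma P$, and each $P$ either commutes or anticommutes with $s_j'$. Expectation zero is preserved either way, because $\mathrm{Tr}[s_j' P\sigma P]=\pm\mathrm{Tr}[s_j'\sigma]$ and the right-hand side is already $0$. Intermediate Pauli measurements are a genuine problem if they anticommute with $s_i'$ but commute with $s_j'$, since they can reset the state. Our first attempt would be to restrict attention to a single round: within one round the checks are measured in a fixed order, and we condition on the last measurement before $s_j'$ among those anticommuting with $s_j'$. The state just after that measurement is an eigenstate of an operator anticommuting with $s_j'$. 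Commuting measurements after it preserve expectation zero, because $\mathrm{Tr}[s_j'Q_\pm\sigma Q_\pm]=\mathrm{Tr}[Q_\pm s_j'\sigma]$ summed appropriately over outcomes. One such anticommuting measurement exists, namely $s_i'$, either in the current round or in the previous round.

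Finally, we combine the two ingredients. Across two consecutive rounds, $s_j'$ takes value $+1$ with probability $1/2$ in the later round, regardless of the earlier value. So the detector bit $o_{j,r}$ equals $1$ with probability $1/2>0$. Under $p_I$ this outcome has probability $0$. Therefore $p_L(\cdot|\rho)\neq p_I(\cdot|\rho)$ for every code state $\rho$, which is stronger than the existence required by Definition~\ref{def:operational-detectability}, and the proposition follows.
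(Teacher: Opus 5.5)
Your proposal is correct and takes essentially the same route as the paper: both contrast the deterministic loss-free check outcomes on code states with the non-determinism forced by the anticommuting pair $s_i',s_j'$, as in Lemma~\ref{lem:anticommuting_nondeterministic}. Your identity $s_j'P_\pm=P_\mp s_j'$, carried through the intervening measurements and Pauli channels, makes the paper's terse step quantitative: it gives a detector-flip probability of exactly $1/2$ for every code state, assuming a fixed within-round measurement order. The ``last anticommuting measurement'' device is in fact unnecessary, because averaging over the outcome of any intermediate Pauli measurement, whether it commutes or anticommutes with $s_j'$, already preserves $\langle s_j'\rangle=0$ after the $s_i'$ projection.
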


\begin{proof}
In the loss-free case, every code state \(\rho=\Pi\rho \Pi\) is stabilized
by the original stabilizer generators. Therefore, the corresponding
stabilizer-measurement outcomes are deterministic.

After the loss pattern $L$, the relevant effective observables are
the punctured operators $s_i'$ and $s_j'$. Since these operators
anticommute, no quantum state can be a simultaneous eigenstate of both
of them. Hence their repeated measurement outcomes cannot both remain
deterministic in the same way as in the loss-free stabilizer
measurement.

Therefore, the classical outcome distribution produced by the
post-loss repeated syndrome-extraction instrument differs from the
loss-free distribution for code states. By
Definition~\ref{def:operational-detectability}, the loss process is
operationally detectable.
\end{proof}

Combining this proposition with the parity criterion in the main text
gives the following operational form of the detectability condition.

\begin{corollary}[Parity criterion for operational detectability]
\label{cor:parity-operational-detectability}
Let \(S=\langle s_1,\ldots,s_m\rangle\) be a stabilizer group, and let
\(L\) be a loss pattern. For two stabilizer generators \(s_i\) and
\(s_j\), define
\begin{equation}
    A_{ij}
    :=
    \{q \mid (s_i)_q (s_j)_q = - (s_j)_q (s_i)_q\}.
\end{equation}
If there exists a pair \(i\neq j\) such that
\begin{equation}
    |L\cap A_{ij}|
\end{equation}
is odd, then the loss process associated with \(L\) is operationally
detectable with respect to repeated stabilizer measurement.
\end{corollary}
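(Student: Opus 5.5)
The plan is to combine Theorem~\ref{thm:parity_criterion} with Proposition~\ref{prop:anticommuting-punctured-detectability}. This mirrors the way Corollary~\ref{cor:parity_detectable} was obtained from Theorems~\ref{thm:parity_criterion} and~\ref{thm:detectability_code_state}, with the operational proposition taking the place of the main-text detectability theorem.

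\textbf{Step 1: the parity condition gives anticommuting punctured checks.} Fix a pair $i\neq j$ with $|L\cap A_{ij}|$ odd. The set $A_{ij}$ defined here coincides with the one in Theorem~\ref{thm:parity_criterion}, and the argument there uses only that $s_i,s_j$ commute, so that $|A_{ij}|$ is even. Hence the punctured operators $s_i'$ and $s_j'$, obtained by deleting the local factors on $L$, satisfy $s_i's_j'=-s_j's_i'$.

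\textbf{Step 2: verify the hypotheses of Proposition~\ref{prop:anticommuting-punctured-detectability}.} We must justify that, under the post-loss syndrome-extraction instrument, $s_i$ and $s_j$ are effectively measured as $s_i'$ and $s_j'$, and that both appear in the repeated record under one common post-loss effective-check model. The first point follows from Assumption~\ref{assumption_noise} together with the remark after it: a failed extraction circuit measures the punctured operator up to a sign and a local Pauli frame. The second point follows from the temporal model in Sec.~\ref{sec:assumption}, where losses occur only between rounds with a fixed CX order, so $L$ is constant throughout each round.

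\textbf{Main obstacle: signs and Pauli-frame ambiguities.} One must check that these ambiguities do not spoil the anticommutation. A sign flip $s'\mapsto -s'$ preserves anticommutation. A local Pauli frame on surviving qubits acts by conjugation with a Pauli operator $P$, and conjugation preserves commutation relations: $(Ps_i'P)(Ps_j'P)=-(Ps_j'P)(Ps_i'P)$. So the effectively measured observables still anticommute, and the proposition applies. It then gives a code state $\rho=\Pi\rho\Pi$ whose outcome distribution differs from the loss-free one, which is exactly Definition~\ref{def:operational-detectability}.
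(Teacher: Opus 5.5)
Your proposal is correct and follows the paper's proof: Theorem~\ref{thm:parity_criterion} turns the oddness of $|L\cap A_{ij}|$ into anticommutation of $s_i'$ and $s_j'$, and Proposition~\ref{prop:anticommuting-punctured-detectability} then gives operational detectability. Your extra checks are sound and only spell out what the paper's two-line proof leaves implicit: that the proposition's hypotheses follow from Assumption~\ref{assumption_noise} and the between-rounds loss model, and that sign or Pauli-frame ambiguities preserve anticommutation.
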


\begin{proof}
By the parity criterion in the main text, the oddness of
\(|L\cap A_{ij}|\) implies that the punctured operators \(s_i'\) and
\(s_j'\) anticommute. The claim then follows immediately from
Proposition~\ref{prop:anticommuting-punctured-detectability}.
\end{proof}

This operational detectability result does not imply that the loss
pattern \(L\) is uniquely identifiable. Two distinct loss patterns may
induce the same set of non-deterministic checks and hence the same
classical signature under the chosen measurement procedure. This is
why the main text separately formulates the loss-inference problem: detectability gives a witness that some loss has affected the
stabilizer statistics, whereas identifiability asks whether the
observed random-check set determines the underlying loss pattern.

Finally, operational detectability and identifiability should not be
confused with full quantum recoverability. Once a loss location is
identified, it can often be treated as an erasure and incorporated
into a recovery procedure. However, the ability to detect or infer the
loss from syndrome statistics is only the information-gathering part
of the correction process. Full recoverability additionally depends on
the code distance, the number and geometry of losses, and the
performance of the subsequent erasure and Pauli decoder. 

\section{M\o lmer--S\o rensen Gate on Trapped Ions}
\label{app:MS}

In this appendix, we explain how the non-entangling behavior in Assumption~\ref{assumption_noise} arises for an ideal M\o lmer--S\o rensen (MS) interaction~\cite{molmer1999multiparticle}. We first review the ideal two-ion single-mode model and then discuss how ion loss may modify the motional-mode structure and produce an implementation-dependent local channel on the surviving ion.

Consider first the intact two-ion crystal before any loss event.
The two trapped-ion qubits, labelled $L$ and $S$, are encoded in long-lived internal states $\{\ket{0}_j,\ket{1}_j\}$ and coupled to a selected normal mode of the two-ion crystal. Let $\nu_{2}$ denote the frequency of this mode, and let $a$ and $a^\dagger$ be its annihilation and creation operators, with $[a,a^\dagger]=1$. Within the Lamb--Dicke and rotating-wave approximations, and after retaining only this selected mode, a bichromatic field whose beat-note frequencies are symmetrically detuned by $\delta$ from the corresponding red and blue sidebands gives the effective interaction-picture Hamiltonian
\begin{equation}
H_I(t)
 =
 \hbar
 \left(
     a e^{-i\delta t}
     +
     a^\dagger e^{i\delta t}
 \right)
 \sum_{j\in\{L,S\}}
 g_j \sigma_\phi^{(j)},
\label{eq:MS_Hamiltonian}
\end{equation}
where $g_j=\eta_j\Omega_j$ is the spin--motion coupling strength and
$\sigma_\phi^{(j)}
 =\cos\phi\,\sigma_x^{(j)}
 +\sin\phi\,\sigma_y^{(j)}$.
Defining
\begin{equation}
F_\phi
 =
 \sum_{j\in\{L,S\}}g_j\sigma_\phi^{(j)},
\end{equation}
the time evolution by Eq.~\eqref{eq:MS_Hamiltonian} can be written as
\begin{equation}
U(t) = D\!\left[\alpha(t)F_\phi\right] \exp\!\left[i\Phi(t)F_\phi^2\right],
\label{eq:MS_evolution}
\end{equation}
where $D(\alpha)=\exp(\alpha a^\dagger-\alpha^*a)$ is the displacement operator acting on the shared motional mode, and
\begin{align}
\alpha(t)
 &=
 \frac{1-e^{i\delta t}}{\delta},\\
\Phi(t)
 &=
 \frac{\delta t-\sin(\delta t)}{\delta^2},
\end{align}
up to convention-dependent signs and coupling factors.

At the gate time $\tau=2\pi/\delta$, the displacement vanishes due to $\alpha(\tau)=0$, and the spin and motional degrees of freedom are disentangled. Expanding the remaining spin operators gives
\begin{equation}
F_\phi^2
 =
 \left(g_L^2+g_S^2\right)I
 +
 2g_Lg_S
 \sigma_\phi^{(L)}\sigma_\phi^{(S)}.
\label{eq:MS_force_square}
\end{equation}
The first term produces only a global phase, whereas the cross term generates the entangling MS interaction.

Suppose now that ion $L$ is lost before the laser pulse is applied.
Within the idealized model in which the remaining motional parameters are
kept fixed, the loss is represented by setting $g_L=0$, and hence
\begin{equation}
F_\phi
 =
 g_S\sigma_\phi^{(S)}.
\end{equation}
The evolution becomes
\begin{equation}
U_{\mathrm{loss}}(t)
 =
 D\!\left[
     \alpha(t)g_S\sigma_\phi^{(S)}
   \right]
 \exp\!\left[
     -i\Phi(t)g_S^2
     \left(\sigma_\phi^{(S)}\right)^2
   \right].
\end{equation}
Since
\begin{equation}
\left(\sigma_\phi^{(S)}\right)^2=I,
\end{equation}
the second factor is a global phase and cannot entangle the surviving ion with any other qubit. 
    If, as an additional idealization, the post-loss mode has the same effective detuning as the original two-ion mode, $\delta_{\rm loss}=\delta$, then the nominal gate duration $\tau=2\pi/|\delta|$ also closes the post-loss phase-space trajectory:
\begin{equation}
\alpha_1(\tau)=0.
\end{equation}
Under this additional closure condition,
\begin{equation}
U_{\mathrm{loss}}(\tau)= e^{-i\Phi(\tau)g_S^2} I_S,
\end{equation}
up to a global phase.

In general, however, ion loss may change the mode frequency and hence the effective detuning, so that $\delta_{\rm loss}\neq\delta$ and $\alpha_1(\tau)\neq 0$. In that case, the nominal two-ion pulse can leave a residual spin-dependent displacement involving only the surviving ion and the post-loss motional mode.


\begin{proposition}
Within the ideal single-mode MS model of
Eq.~\eqref{eq:MS_Hamiltonian}, loss of one ion before the gate removes
the two-ion interaction. If the post-loss phase-space trajectory closes
at the nominal gate time $\tau$, the surviving ion undergoes no
nontrivial operation, up to a global phase.
\end{proposition}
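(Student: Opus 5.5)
The plan is to work directly with the closed-form evolution operator of Eq.~\eqref{eq:MS_evolution} and specialize it to the post-loss Hamiltonian. The first step is to confirm that the factorized form $U(t)=D[\alpha(t)F_\phi]\exp[i\Phi(t)F_\phi^2]$ remains valid when $g_L=0$. Its derivation (a Magnus expansion that terminates at second order) uses only that $H_I(t)$ is linear in $a,a^\dagger$ and that $F_\phi$ is a time-independent Hermitian operator commuting with itself. Both properties survive setting $g_L=0$, so $F_\phi=g_S\sigma_\phi^{(S)}$ can be substituted directly. With the post-loss detuning $\delta_{\rm loss}$ in place of $\delta$, this gives $\alpha_1(t)=(1-e^{i\delta_{\rm loss}t})/\delta_{\rm loss}$ and the corresponding $\Phi_1(t)$.

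The second step shows that the interaction between the two ions has been removed. The Hamiltonian after loss contains no operator acting on $L$ at all. Hence $U_{\rm loss}(t)$ acts as the identity on the Hilbert space of $L$ (whatever remains of it) and lies in the algebra generated by $\sigma_\phi^{(S)}$ and the mode operators. Such a unitary cannot create entanglement between $S$ and $L$, or between $S$ and any other qubit. Equivalently, in Eq.~\eqref{eq:MS_force_square} the cross term $2g_Lg_S\sigma_\phi^{(L)}\sigma_\phi^{(S)}$ vanishes. The phase factor then reduces to $\exp[i\Phi_1 g_S^2 (\sigma_\phi^{(S)})^2]=e^{i\Phi_1 g_S^2}I$, using $(\sigma_\phi^{(S)})^2=I$. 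This identity follows from $\sigma_\phi=\cos\phi\,\sigma_x+\sin\phi\,\sigma_y$ together with the anticommutation of $\sigma_x$ and $\sigma_y$.

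The third step applies the closure hypothesis. If $\alpha_1(\tau)=0$ (for instance when $\delta_{\rm loss}=\delta$ and $\tau=2\pi/|\delta|$), then $D[0]=I$ on the motional mode. The full evolution is therefore $e^{i\Phi_1(\tau)g_S^2}I_S\otimes I_{\rm mode}$, a global phase, and the surviving ion undergoes no nontrivial operation and is not entangled with the motion.

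The main subtlety, more a matter of care than a real obstacle, is the first step: the sign conventions in $\Phi$ are stated only ``up to convention-dependent signs.'' These conventions affect only the value of the global phase, not the conclusion. One should also note explicitly that the phase is operator-independent only because $F_\phi^2\propto I$ once a single ion remains.
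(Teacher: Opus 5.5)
Your proposal is correct and follows the paper's argument. You substitute $g_L=0$ into the factorized evolution $U(t)=D[\alpha(t)F_\phi]\exp[i\Phi(t)F_\phi^2]$, use $(\sigma_\phi^{(S)})^2=I$ to reduce the second factor to a global phase, and invoke the closure condition $\alpha_1(\tau)=0$ to make the displacement trivial; your explicit check that the second-order Magnus factorization survives setting $g_L=0$, and your remark that the sign conventions only change the value of the phase, are small additions the paper leaves implicit.
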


The conclusion above applies to the ideal spin-dependent-force model. A physical implementation may contain additional terms that are not included in Eq.~\eqref{eq:MS_Hamiltonian}, such as off-resonant carrier coupling, differential AC Stark shifts, coupling to spectator modes, pulse-shape imperfections, or errors in the phase-space closure condition. After one ion is lost, these terms may leave a residual operation on the surviving ion. Since the lost ion is absent, however, such an operation remains local to the surviving subsystem and cannot restore the intended two-ion entangling interaction.

For a fixed realization, the residual operation is generally a single-qubit channel and need not be a Pauli channel. The Pauli-channel
model in Assumption~\ref{assumption_noise} should therefore be understood as an effective stochastic description obtained, for example, after averaging over fluctuating control parameters or after Pauli-frame randomization. Under this effective description,
\begin{equation}
\label{eq:single-qubit_Pauli_error}
\mathcal E_S(\rho)
 =
 \sum_{P\in\{I,X,Y,Z\}}
 p_P P\rho P,
\end{equation}
which is precisely the non-entangling behavior assumed in the main text.

Thus, the ideal MS interaction provides a direct realization of the non-entangling part of Assumption~\ref{assumption_noise}. When one ion
is lost before the gate, the entangling cross term vanishes. The additional Pauli channel represents implementation-dependent local
faults on the surviving ion rather than an intrinsic consequence of the ideal MS Hamiltonian.

\section{Loss treatments on Rydberg atoms}
This appendix has two purposes. First, we justify Assumption~\ref{assumption_noise} for the ideal Rydberg-blockade CZ gate. Second, we briefly review state-selective readout (SSR),
which serves as a representative hardware-based loss-detection approach discussed in Sec.~\ref{sec:preliminary}.

\subsection{CZ gate via Rydberg blockade}
\label{app:blockade-CZ}
Let us first consider a CZ gate of Rydberg blockade on a pair of neutral atoms.
This type of atom qubit is often called a Rydberg qubit.
Due to the many sublevels inside an atom, there is a variety of qubit encodings.
Encoding a qubit into the hyperfine structure of the ground states or metastable states is the typical encoding for a Rydberg qubit~\cite{PhysRevX.12.021027, PRXQuantum.6.020334, senoo2025, Ma2023, Zhang_2026}.
Therefore, the properties of Rydberg qubits are not uniformly understood because they entirely depend on the atomic species and the encoding type of qubit.
However, their gate operation can be explained in a common framework since every encoded qubit has common features in the two-qubit gate operation.
Thereafter, we assume a simplified structure of a Rydberg qubit which consists of four levels: qubit states $\ket{0}$ and $\ket{1}$, Rydberg state $\ket{r}$, and the leakage state $\ket{L}$.

A typical state manipulation on Rydberg qubits is generally described as a two-level system consisting of $\ket{\xi}$ and $\eta$ operated by a monochromatic optical field with the local phase $\varphi(t)$ and the coupling frequency~$\omega$. This dynamics is described as the Hamiltonian
\begin{equation}
  H^{\xi \eta} = \left( \frac{\Omega(t)}{2} e^{i\varphi(t)}\ket{\xi}\bra{\eta} + \text{h.c.} \right) -\Delta(t) \ket{\eta}\bra{\eta}, \label{eq:RWA_Hamiltonian}
\end{equation}
where $\Omega(t)$ and $\Delta(t)=\omega-\omega_0$ are the Rabi frequency and the local detuning from the resonance frequency of $\omega_0$, respectively.

The origin of interactions between Rydberg qubits is the dipole-dipole
interaction between Rydberg states~\cite{PhysRevLett.85.2208}. 
Consider the dipole-dipole potential between the atom labelled by c and t denoting ``control'' and ``target'', respectively. Then suppose $\ket{1}_j$ state of both Rydberg qubits are coupled with the same Rydberg state $\ket{r}_j$, where $j$ is the label of two atoms either c or t.
The interaction dynamics of a two-qubit system through the Rydberg state is described as the Hamiltonian~\cite{PhysRevLett.85.2208, morgado2021quantum}
\begin{equation}\label{eq:Rydberg_Hamiltonian}
\begin{split}
  &H(t) = V_{c,t}\ket{rr}_{c,t}\bra{rr}\\
  &+ \sum_{j\in\{c,t\}} \left[\left( \Delta_j(t)-i\gamma\right)\ket{r}_j\bra{r}-\frac{\Omega_j(t)}{2}\left(\ket{1}_j\bra{r}_j + \text{h.c.} \right)\right]\\
  &=H^{rrrr}_{c,t} + \sum_{j\in\{c,t\}}H^{1r}_{j},
\end{split}
\end{equation}
where $V_{c,t}$, $\delta_j(t)$, $\gamma$ and $\Omega_j(t)$ are the energy shift inducing Rydberg blockade, the detuning of the laser field from Rydberg state, decay rate from $\ket{r}$, and Rabi frequency of the laser field to excite $\ket{0}_j$ to $\ket{r}_j$.

The main noise source of loss is the decay from $\ket{r}$ denoted by $\gamma$ in the Eq.~\eqref{eq:Rydberg_Hamiltonian}.
This decay can be assumed as a transition to $\ket{L}$ where the transited state has never return to the qubit space effectively.
This error is a dominant noise source for the neutral atom quantum computation~\cite{wu2022erasure, scholl2023erasure, bluvstein2026fault, jandura2026, fumiyoshi_forthcomin2026}.

The Hamiltonian of Eq.~\eqref{eq:Rydberg_Hamiltonian} can be separated to the interaction term of Rydberg states $H^{rrrr}_{c,t}$ and single-atom excitations $H^{0r}_{j}$
as the right-hand side of the equation.
Then the CZ gate is realised the sequential operations
\begin{equation}\label{eq:Rydberg_blockade_gate}
\begin{split}
   U_{\text{CZ}} =& \exp[-iH^{1r}_c\tau_1]\exp[-i(H^{r1}_c+H^{rrrr}_{c,t})\tau_2]\\
   &\times \exp[-i(H^{1r}_1)\tau_1]
\end{split}
\end{equation}
with $\Delta_c=\Delta_t=0$, $\Omega_c=\Omega_t=\Omega$, $\tau_1=\pi/\Omega$ and $\tau_2=2\pi/\Omega$. This sequence represents applying $\pi$-pulse for the transition $\ket{1}_c\rightarrow\ket{r}_c$ at first, then applying 2$\pi$-pulse for $\ket{1}_t\rightarrow\ket{r}_t$ and finally applying $\pi$-pulse again to return back$\ket{r}_c\rightarrow\ket{1}_c$.
The strong Rydberg blockade prevents the second Rydberg excitation of the target qubit when $|V_{c,t}|\gg \Omega$.
While ideally the first and final excitations cause no effect to the control qubit, each
computational state returns to itself after the entire sequence but acquires a $\pi$ phase shift during each Rabi oscillation due to the Rydberg blockade.

In this implementation, the entangling phase arises only when both atoms are present and interact through the Rydberg blockade mechanism.
If one of the atoms is lost, the blockade interaction is absent, and the intended two-qubit gate no longer generates entanglement.
As a result, when qubit $L$ is lost and qubit $S$ survives, the action of the gate ideally attributes to the identity operation on the surviving qubit $S$:
$$
\ket{\phi}_S \longmapsto \ket{\phi}_S .
$$
This is the reason why the assumption in the main text is satisfied in the special case where no additional Pauli error is induced on the surviving qubit, namely $p_I=1$ and $p_X = p_Y = p_Z = 0$.

As described in Appendix~\ref{app:MS}, additional noise terms originating from the experimental setup affect the qubit as noise sources during the gate operation. 
Even though there are these extra noise sources, the noise on the surviving qubit can be assumed to act locally as the same as discussed in Appendix~\ref{app:MS} as long as the surviving atom qubits are isolated from each other with enough distance. Under this condition, the noise on the surviving qubit is assumed to be Eq.~\eqref{eq:single-qubit_Pauli_error}. Thus, Assumption~\ref{assumption_noise} is also satisfied by the CZ gate for Rydberg atoms.

\subsection{State-selective readout for auxiliary qubit loss}
\label{app:atom-SSR}
For neutral atom qubits, the qubit state encoded in the hyperfine ground or metastable manifolds is measured through resonance fluorescence state-selectively, called state-selective readout (SSR)~\cite{PhysRevA.108.03240, PRXQuantum.4.030337}.
One qubit state, for example $\ket{0}$, is coupled to a cycling transition and scatters many photons as the ``bright'' state, while the other, for example $\ket{1}$, remains decoupled from the probe light as the ``dark'' state.
The qubit state is then inferred from the detection of each selected state. 

For example of the ground-state qubit of $^{1}\text{S}_0$ with hyperfine structure $F=1/2$ on $^{171}\text{Yb}$ in the figure~\ref{fig:state-selective readout}, the qubit is defined as $\ket{m_F = -1/2}\equiv \ket{0}$ and $\ket{m_F = -1/2}\equiv \ket{0}$. Assuming that the narrow transition $^{1}\text{S}_0\leftrightarrow ^{3}\text{P}_1 (F=3/2)$ especially targeting the $m_F=-3/2$ of $^{3}\text{P}_1$, $\ket{0}$ is a decayed state only allowed from $^{3}\text{P}_1$ $m_F=-3/2$ due to the selection rule.
As a result, occupation of $\ket{0}$ is detectable as scattering of many photons (bright state), while $\ket{1}$ is entirely off-resonant and dark.
These bright and dark fluorescences are directly mapped onto the occupation of $\ket{0}$.
Although the bright result indicates $\ket{0}$, the dark result is ambiguous: a dark result is consistent with the atom being in $\ket{1}$ or the loss.
To resolve this, for example, consider targeting $m_F=3/2$ of $^{3}\text{P}_1$, which is the symmetrically opposite transition. Since the fluorescence of this transition can be mapped onto the occupation of $\ket{1}$ oppositely, sequential readouts of $\ket{0}$ and $\ket{1}$ allow us to distinguish $\ket{0}, \ket{1}$ and loss.

\begin{figure}[htbp]
    \centering
    \hspace{5pt}
    \begin{tikzpicture}[>=Latex, thick, scale=1.0]
      \draw (0.0,2.6) -- (0.6,2.6);
      \draw (1.6,3.0) -- (2.2,3.0);
      \draw (3.2,3.4) -- (3.8,3.4);
      \draw (4.8,3.8) -- (5.4,3.8);
      \draw[dashed, gray] (0.6,2.6) -- (1.6,3.0);
      \draw[dashed, gray] (2.2,3.0) -- (3.2,3.4);
      \draw[dashed, gray] (3.8,3.4) -- (4.8,3.8);
    
      \node[above] at (-0.2,2.5)  {\small $m_F=-3/2$};
      \node[above] at (1.9,3.0)  {\small $-1/2$};
      \node[above] at (3.5,3.4)  {\small $+1/2$};
      \node[above] at (5.1,3.8)  {\small $+3/2$};
      \node[left] at (-0.2,3.4) {$^3P_1,\ F=3/2$};
    
      \draw (1.6,0) -- (2.2,0);
      \draw (3.2,0) -- (3.8,0);
      \node[below] at (1.50,0) {$m_F=-1/2$};
      \node[above] at (2.20,0) {$|0\rangle$};
      \node[below] at (3.4,0) {$+1/2$};
      \node[above] at (3.9,0) {$|1\rangle$};
      \node[left] at (-0.2,0) {$^1S_0,\ F=1/2$};
    
      \draw[->, thick, green!50!black] (1.9,0.08) -- (0.3,2.5);
      \draw[->, thick, red!60!black]   (3.5,0.08) -- (5.1,3.7);
    \end{tikzpicture}
    \caption{A schematic example of the SSR of the ground-state qubit on $^{171}Yb$. The green and red arrows represents the transitions of $m_F=-3/2$ of $^{3}\text{P}_1$ and $m_F=3/2$ of $^{3}\text{P}_1$. Driving these transitions sequentially, the occupation of quantum states in $\{\ket{0}, \ket{1}\}$ can be known separately. Mapping both measurement readouts into the qubit state, it is possible to distinguish $\ket{0}$ (bright, dark), $\ket{1}$ (dark, bright) and loss (dark, dark).}
    \label{fig:state-selective readout}
\end{figure}

The overhead of this readout scheme is comparatively smaller than introducing LRU since it only requires few additional probe lasers, while the LRU requires frequent CZ gates or measurement of auxiliary qubits.
Thanks to the capability of detecting loss of a measured atom, this readout provides high detection fidelity and distinguishes the measured states to be $\ket{0}$, $\ket{1}$ or loss.
This atom loss information makes the qubit loss detectable. Moreover, it enables decoders to recognise the places where an auxiliary qubit is lost, which is expected to be improved~\cite{baranes2026leveraging}.
\end{document}

%% file: preamble.tex
\usepackage[english]{babel}
\usepackage[utf8]{inputenc}
\usepackage[colorinlistoftodos, color=green!40, prependcaption]{todonotes}
\usepackage[pdftex, pdftitle={Article}, pdfauthor={Author}]{hyperref} 
\usepackage{mathrsfs}
\usetikzlibrary{arrows.meta}
\usepackage{amsthm}
\usepackage{tcolorbox}
\usepackage[caption=false]{subfig}
\newtheorem{problem}{Problem}

\usepackage{amsthm}
\usepackage{mathtools}
\usepackage{physics}
\usepackage{xcolor}
\usepackage{graphicx}
\usepackage{algorithm}
\usepackage{algpseudocode}
\usepackage[left=23mm,right=13mm,top=35mm,columnsep=15pt]{geometry} 
\usepackage{adjustbox}
\usepackage{placeins}
\usepackage{csquotes}
\usepackage{wrapfig}
\usepackage[english]{babel}
\usepackage[utf8]{inputenc}
\usepackage{braket}
\usepackage{mathrsfs}
\usepackage{amsmath}               
\theoremstyle{plain}
\newtheorem{assumption}{Assumption}
\newtheorem{theorem}{Theorem}
\newtheorem{lemma}{Lemma}
\newtheorem{definition}[theorem]{Definition}
\newtheorem{corollary}[theorem]{Corollary}
\newtheorem*{remark}{Remark}
\newtheorem{proposition}{Proposition}

%% file: main.bib
@article{chow2024circuit,
  title={Circuit-based leakage-to-erasure conversion in a neutral-atom quantum processor},
  author={Chow, Matthew NH and Buchemmavari, Vikas and Omanakuttan, Sivaprasad and Little, Bethany J and Pandey, Saurabh and Deutsch, Ivan H and Jau, Yuan-Yu},
  journal={PRX Quantum},
  volume={5},
  number={4},
  pages={040343},
  year={2024},
  publisher={APS}
}

@article{delfosse2020linear,
  title={Linear-time maximum likelihood decoding of surface codes over the quantum erasure channel},
  author={Delfosse, Nicolas and Z{\'e}mor, Gilles},
  journal={Physical Review Research},
  volume={2},
  number={3},
  pages={033042},
  year={2020},
  publisher={APS}
}

@article{kang2023quantum,
  title={Quantum error correction with metastable states of trapped ions using erasure conversion},
  author={Kang, Mingyu and Campbell, Wesley C and Brown, Kenneth R},
  journal={PRX Quantum},
  volume={4},
  number={2},
  pages={020358},
  year={2023},
  publisher={APS}
}

@article{wu2022erasure,
  title={Erasure conversion for fault-tolerant quantum computing in alkaline earth Rydberg atom arrays},
  author={Wu, Yue and Kolkowitz, Shimon and Puri, Shruti and Thompson, Jeff D},
  journal={Nature communications},
  volume={13},
  number={1},
  pages={4657},
  year={2022},
  publisher={Nature Publishing Group UK London}
}

@article{Fu2025errorcorrectionin,
  doi = {10.22331/q-2025-10-20-1886},
  url = {https://doi.org/10.22331/q-2025-10-20-1886},
  title = {Error {C}orrection in {D}ynamical {C}odes},
  author = {Fu, Esther Xiaozhen and Gottesman, Daniel},
  journal = {{Quantum}},
  issn = {2521-327X},
  publisher = {{Verein zur F{\"{o}}rderung des Open Access Publizierens in den Quantenwissenschaften}},
  volume = {9},
  pages = {1886},
  month = oct,
  year = {2025}
}

@article{knill1997theory,
  title={Theory of quantum error-correcting codes},
  author={Knill, Emanuel and Laflamme, Raymond},
  journal={Physical Review A},
  volume={55},
  number={2},
  pages={900},
  year={1997},
  publisher={APS}
}

@article{bombin2007optimal,
  title={Optimal resources for topological two-dimensional stabilizer codes: Comparative study},
  author={Bomb{\'\i}n, H{\'e}ctor and Martin-Delgado, Miguel A},
  journal={Physical Review A—Atomic, Molecular, and Optical Physics},
  volume={76},
  number={1},
  pages={012305},
  year={2007},
  publisher={APS}
}

@article{baranes2026leveraging,
  title={Leveraging Qubit Loss Detection in Fault-Tolerant Quantum Algorithms},
  author={Baranes, Gefen and Cain, Madelyn and Ataides, J Pablo Bonilla and Bluvstein, Dolev and Sinclair, Josiah and Vuleti{\'c}, Vladan and Zhou, Hengyun and Lukin, Mikhail D},
  journal={Physical Review X},
  volume={16},
  number={1},
  pages={011002},
  year={2026},
  publisher={APS}
}

@article{perrin2025quantum,
  title={Quantum error correction resilient against atom loss},
  author={Perrin, Hugo and Jandura, Sven and Pupillo, Guido},
  journal={Quantum},
  volume={9},
  pages={1884},
  year={2025},
  publisher={Verein zur F{\"o}rderung des Open Access Publizierens in den Quantenwissenschaften}
}

@article{wood2018quantification,
  title={Quantification and characterization of leakage errors},
  author={Wood, Christopher J and Gambetta, Jay M},
  journal={Physical Review A},
  volume={97},
  number={3},
  pages={032306},
  year={2018},
  publisher={APS}
}

@article{stricker2020experimental,
  title={Experimental deterministic correction of qubit loss},
  author={Stricker, Roman and Vodola, Davide and Erhard, Alexander and Postler, Lukas and Meth, Michael and Ringbauer, Martin and Schindler, Philipp and Monz, Thomas and M{\"u}ller, Markus and Blatt, Rainer},
  journal={Nature},
  volume={585},
  number={7824},
  pages={207--210},
  year={2020},
  publisher={Nature Publishing Group UK London}
}

@article{grassl1997codes,
  title={Codes for the quantum erasure channel},
  author={Grassl, Markus and Beth, Th and Pellizzari, Thomas},
  journal={Physical Review A},
  volume={56},
  number={1},
  pages={33},
  year={1997},
  publisher={APS}
}

@article{leahy2019quantum,
  title={Quantum insertion-deletion channels},
  author={Leahy, Janet and Touchette, Dave and Yao, Penghui},
  journal={arXiv preprint arXiv:1901.00984},
  year={2019}
}

@article{aliferis2005fault,
  title={Fault-tolerant quantum computation for local leakage faults},
  author={Aliferis, Panos and Terhal, Barbara M},
  journal={arXiv preprint quant-ph/0511065},
  year={2005}
}

@article{liu2026achieving,
  title={Achieving Optimal-Distance Atom-Loss Correction via Pauli Envelope},
  author={Liu, Pengyu and Tan, Shi Jie Samuel and Huang, Eric and Acar, Umut A and Zhou, Hengyun and Zhao, Chen},
  journal={arXiv preprint arXiv:2603.04156},
  year={2026}
}

@article{wasilewski2007protecting,
  title={Protecting an optical qubit against photon loss},
  author={Wasilewski, Wojciech and Banaszek, Konrad},
  journal={Physical Review A—Atomic, Molecular, and Optical Physics},
  volume={75},
  number={4},
  pages={042316},
  year={2007},
  publisher={APS}
}

@article{kobayashi2025erasure,
  title={Erasure-tolerance scheme for the surface codes on neutral atom quantum computers},
  author={Kobayashi, Fumiyoshi and Nagayama, Shota},
  journal={IEEE Transactions on Quantum Engineering},
  year={2025},
  publisher={IEEE}
}

@article{vala2005quantum,
  title={Quantum error correction of a qubit loss in an addressable atomic system},
  author={Vala, Jiri and Whaley, K Birgitta and Weiss, David S},
  journal={Physical Review A—Atomic, Molecular, and Optical Physics},
  volume={72},
  number={5},
  pages={052318},
  year={2005},
  publisher={APS}
}

@article{hagiwara2025introduction,
  title={Introduction to quantum deletion error-correcting codes},
  author={Hagiwara, Manabu},
  journal={IEICE Transactions on Fundamentals of Electronics, Communications and Computer Sciences},
  volume={108},
  number={3},
  pages={363--375},
  year={2025},
  publisher={The Institute of Electronics, Information and Communication Engineers}
}

@inproceedings{nakayama2020single,
  title={Single quantum deletion error-correcting codes},
  author={Nakayama, Ayumu and Hagiwara, Manabu},
  booktitle={2020 International Symposium on Information Theory and Its Applications (ISITA)},
  pages={329--333},
  year={2020},
  organization={IEEE}
}

@book{gottesman1997stabilizer,
  title={Stabilizer codes and quantum error correction},
  author={Gottesman, Daniel},
  year={1997},
  publisher={California Institute of Technology}
}

@article{de2025mid,
  title={A mid-circuit erasure check on a dual-rail cavity qubit using the joint-photon number-splitting regime of circuit QED},
  author={de Graaf, Stijn J and Xue, Sophia H and Chapman, Benjamin J and Teoh, James D and Tsunoda, Takahiro and Winkel, Patrick and Garmon, John WO and Chang, Kathleen M and Frunzio, Luigi and Puri, Shruti and others},
  journal={npj Quantum Information},
  volume={11},
  number={1},
  pages={1},
  year={2025},
  publisher={Nature Publishing Group UK London}
}

@article{li2025fast,
  title={Fast, continuous and coherent atom replacement in a neutral atom qubit array},
  author={Li, Yiyi and Bao, Yicheng and Peper, Michael and Li, Chenyuan and Thompson, Jeff D},
  journal={arXiv preprint arXiv:2506.15633},
  year={2025}
}

@inproceedings{litteken2022reducing,
  title={Reducing runtime overhead via use-based migration in neutral atom quantum architectures},
  author={Litteken, Andrew and Baker, Jonathan M and Chong, Frederic T},
  booktitle={2022 IEEE International Conference on Quantum Computing and Engineering (QCE)},
  pages={566--576},
  year={2022},
  organization={IEEE}
}

@incollection{karp2009reducibility,
  title={Reducibility among combinatorial problems},
  author={Karp, Richard M},
  booktitle={50 Years of Integer Programming 1958-2008: from the Early Years to the State-of-the-Art},
  pages={219--241},
  year={2009},
  publisher={Springer}
}

@inproceedings{slavik1996tight,
  title={A tight analysis of the greedy algorithm for set cover},
  author={Slav{\'\i}k, Petr},
  booktitle={Proceedings of the twenty-eighth annual ACM symposium on Theory of computing},
  pages={435--441},
  year={1996}
}

@article{delfosse2021almost,
  title={Almost-linear time decoding algorithm for topological codes},
  author={Delfosse, Nicolas and Nickerson, Naomi H},
  journal={Quantum},
  volume={5},
  pages={595},
  year={2021},
  publisher={Verein zur F{\"o}rderung des Open Access Publizierens in den Quantenwissenschaften}
}

@article{barrett2010fault,
  title={Fault tolerant quantum computation with very high threshold for loss errors},
  author={Barrett, Sean D and Stace, Thomas M},
  journal={Physical review letters},
  volume={105},
  number={20},
  pages={200502},
  year={2010},
  publisher={APS}
}

@article{stace2009thresholds,
  title={Thresholds for topological codes in the presence of loss},
  author={Stace, Thomas M and Barrett, Sean D and Doherty, Andrew C},
  journal={Physical review letters},
  volume={102},
  number={20},
  pages={200501},
  year={2009},
  publisher={APS}
}

@article{nishio2025multiplexed,
  title={Multiplexed quantum communication with surface and hypergraph product codes},
  author={Nishio, Shin and Connolly, Nicholas and Piparo, Nicol{\`o} Lo and Munro, William John and Scruby, Thomas Rowan and Nemoto, Kae},
  journal={Quantum},
  volume={9},
  pages={1613},
  year={2025},
  publisher={Verein zur F{\"o}rderung des Open Access Publizierens in den Quantenwissenschaften}
}

@article{google2025quantum,
  title={Quantum error correction below the surface code threshold},
  journal={Nature},
  volume={638},
  number={8052},
  pages={920--926},
  year={2025},
  publisher={Nature Publishing Group UK London}
}

@article{bluvstein2026fault,
  title={A fault-tolerant neutral-atom architecture for universal quantum computation},
  author={Bluvstein, Dolev and Geim, Alexandra A and Li, Sophie H and Evered, Simon J and Bonilla Ataides, J Pablo and Baranes, Gefen and Gu, Andi and Manovitz, Tom and Xu, Muqing and Kalinowski, Marcin and others},
  journal={Nature},
  volume={649},
  number={8095},
  pages={39--46},
  year={2026},
  publisher={Nature Publishing Group UK London}
}

@article{gu2025optimizing,
  title={Optimizing quantum error-correction protocols with erasure qubits},
  author={Gu, Shouzhen and Vaknin, Yotam and Retzker, Alex and Kubica, Aleksander},
  journal={PRX Quantum},
  volume={6},
  number={4},
  pages={040354},
  year={2025},
  publisher={APS}
}

@article{chang2025surface,
  title={Surface code with imperfect erasure checks},
  author={Chang, Kathleen and Singh, Shraddha and Claes, Jahan and Sahay, Kaavya and Teoh, James and Puri, Shruti},
  journal={PRX Quantum},
  volume={6},
  number={4},
  pages={040355},
  year={2025},
  publisher={APS}
}

@article{molmer1999multiparticle,
  title={Multiparticle entanglement of hot trapped ions},
  author={M{\o}lmer, Klaus and S{\o}rensen, Anders},
  journal={Physical Review Letters},
  volume={82},
  number={9},
  pages={1835},
  year={1999},
  publisher={APS}
}

@article{smith2025single,
  title={Single-qubit gates with errors at the 10-7 level},
  author={Smith, Molly C and Leu, Aaron D and Miyanishi, Koichiro and Gely, Mario F and Lucas, David M},
  journal={Physical Review Letters},
  volume={134},
  number={23},
  pages={230601},
  year={2025},
  publisher={APS}
}

@article{hughes2025trapped,
  title={Trapped-ion two-qubit gates with> 99.99\% fidelity without ground-state cooling},
  author={Hughes, AC and Srinivas, R and L{\"o}schnauer, CM and Knaack, HM and Matt, R and Ballance, CJ and Malinowski, M and Harty, TP and Sutherland, RT},
  journal={arXiv preprint arXiv:2510.17286},
  year={2025}
}

@article{sotirova2024high,
  title={High-fidelity heralded quantum state preparation and measurement},
  author={Sotirova, AS and Leppard, JD and Vazquez-Brennan, A and Decoppet, SM and Pokorny, F and Malinowski, M and Ballance, CJ},
  journal={arXiv preprint arXiv:2409.05805},
  year={2024}
}

@article{kiefer2026protected,
  title={Protected quantum gates using qubit doublons in dynamical optical lattices},
  author={Kiefer, Yann and Zhu, Zijie and Fischer, Lars and Jele, Samuel and G{\"a}chter, Marius and Bisson, Giacomo and Viebahn, Konrad and Esslinger, Tilman},
  journal={Nature},
  volume={652},
  number={8110},
  pages={609--614},
  year={2026},
  publisher={Nature Publishing Group}
}

@article{sheng2018high,
  title={High-fidelity single-qubit gates on neutral atoms in a two-dimensional magic-intensity optical dipole trap array},
  author={Sheng, Cheng and He, Xiaodong and Xu, Peng and Guo, Ruijun and Wang, Kunpeng and Xiong, Zongyuan and Liu, Min and Wang, Jin and Zhan, Mingsheng},
  journal={Physical review letters},
  volume={121},
  number={24},
  pages={240501},
  year={2018},
  publisher={APS}
}

@article{wang2025ultrafast,
  title={Ultrafast high-fidelity state readout of single neutral atom},
  author={Wang, Jian and Huang, Dong-Yu and Zhou, Xiao-Long and Shen, Ze-Min and He, Si-Jian and Huang, Qi-Yang and Liu, Yi-Jia and Li, Chuan-Feng and Guo, Guang-Can},
  journal={Physical review letters},
  volume={134},
  number={24},
  pages={240802},
  year={2025},
  publisher={APS}
}

@article{brown2019leakage,
  title={Leakage mitigation for quantum error correction using a mixed qubit scheme},
  author={Brown, Natalie C and Brown, Kenneth R},
  journal={Physical Review A},
  volume={100},
  number={3},
  pages={032325},
  year={2019},
  publisher={APS} 
}

@article{mehta2025bias,
  title={Bias-preserving and error-detectable entangling operations in a superconducting dual-rail system},
  author={Mehta, Nitish and Teoh, James D and Noh, Taewan and Agrawal, Ankur and Anderson, Amos and Birdsall, Beau and Brahmbhatt, Avadh and Byrd, Winfred and Cacioppo, Marc and Cabrera, Anthony and others},
  journal={arXiv preprint arXiv:2503.10935},
  year={2025}
}

@article{ralph2015photon,
  title={Photon sorting, efficient Bell measurements, and a deterministic controlled-Z gate using a passive two-level nonlinearity},
  author={Ralph, TC and S{\"o}llner, I and Mahmoodian, S and White, AG and Lodahl, P},
  journal={Physical Review Letters},
  volume={114},
  number={17},
  pages={173603},
  year={2015},
  publisher={APS}
}

@article{PhysRevX.12.021027,
  title = {Ytterbium Nuclear-Spin Qubits in an Optical Tweezer Array},
  author = {Jenkins, Alec and Lis, Joanna W. and Senoo, Aruku and McGrew, William F. and Kaufman, Adam M.},
  journal = {Phys. Rev. X},
  volume = {12},
  issue = {2},
  pages = {021027},
  numpages = {21},
  year = {2022},
  month = {May},
  publisher = {American Physical Society},
  doi = {10.1103/PhysRevX.12.021027},
  url = {https://link.aps.org/doi/10.1103/PhysRevX.12.021027}
}

@article{PRXQuantum.6.020334,
  title = {High-Fidelity Universal Gates in the ${}^{171}$$\mathrm{Yb}$ Ground-State Nuclear-Spin Qubit},
  author = {Muniz, J. A. and Stone, M. and Stack, D. T. and Jaffe, M. and Kindem, J. M. and Wadleigh, L. and Zalys-Geller, E. and Zhang, X. and Chen, C.-A. and Norcia, M. A. and Epstein, J. and Halperin, E. and Hummel, F. and Wilkason, T. and Li, M. and Barnes, K. and Battaglino, P. and Bohdanowicz, T. C. and Booth, G. and Brown, A. and Brown, M. O. and Cairncross, W. B. and Cassella, K. and Coxe, R. and Crow, D. and Feldkamp, M. and Griger, C. and Heinz, A. and Jones, A. M. W. and Kim, H. and King, J. and Kotru, K. and Lauigan, J. and Marjanovic, J. and Megidish, E. and Meredith, M. and McDonald, M. and Morshead, R. and Narayanaswami, S. and Nishiguchi, C. and Paule, T. and Pawlak, K. A. and Pudenz, K. L. and P\'erez, D. Rodr\'{\i}guez and Ryou, A. and Simon, J. and Smull, A. and Urbanek, M. and van de Veerdonk, R. J. M. and Vendeiro, Z. and Wu, T.-Y. and Xie, X. and Bloom, B. J.},
  journal = {PRX Quantum},
  volume = {6},
  issue = {2},
  pages = {020334},
  numpages = {18},
  year = {2025},
  month = {May},
  publisher = {American Physical Society},
  doi = {10.1103/PRXQuantum.6.020334},
  url = {https://link.aps.org/doi/10.1103/PRXQuantum.6.020334}
}

@misc{senoo2025,
      title={High-fidelity entanglement and coherent multi-qubit mapping in an atom array}, 
      author={Aruku Senoo and Alexander Baumgärtner and Joanna W. Lis and Gaurav M. Vaidya and Zhongda Zeng and Giuliano Giudici and Hannes Pichler and Adam M. Kaufman},
      year={2025},
      eprint={2506.13632},
      archivePrefix={arXiv},
      primaryClass={quant-ph},
      url={https://arxiv.org/abs/2506.13632}, 
}

@article{Ma2023,
    title={High-fidelity gates and mid-circuit erasure conversion in an atomic qubit}, 
    volume={622}, 
    DOI={10.1038/s41586-023-06438-1}, 
    number={7982}, 
    journal={Nature}, 
    author={Ma, Shuo and Liu, Genyue and Peng, Pai and Zhang, Bichen and Jandura, Sven and Claes, Jahan and Burgers, Alex P. and Pupillo, Guido and Puri, Shruti and Thompson, Jeff D.}, 
    year={2023}, 
    month={Oct}, 
    pages={279–284}
}

@article{Zhang_2026,
   title={Logical qubits with erasure conversion using metastable neutral atoms},
   author={Zhang, Bichen and Liu, Genyue and Bornet, Guillaume and Horvath, Sebastian P. and Peng, Pai and Ma, Shuo and Huang, Shilin and Puri, Shruti and Thompson, Jeff D.},
   volume={22},
   ISSN={1745-2481},
   url={http://dx.doi.org/10.1038/s41567-026-03309-0},
   DOI={10.1038/s41567-026-03309-0},
   number={6},
   journal={Nature Physics},
   publisher={Springer Science and Business Media LLC},
   year={2026},
   month={June},
   pages={910–916} }

@article{PhysRevLett.85.2208,
  title = {Fast Quantum Gates for Neutral Atoms},
  author = {Jaksch, D. and Cirac, J. I. and Zoller, P. and Rolston, S. L. and C\^ot\'e, R. and Lukin, M. D.},
  journal = {Phys. Rev. Lett.},
  volume = {85},
  issue = {10},
  pages = {2208--2211},
  numpages = {0},
  year = {2000},
  month = {Sep},
  publisher = {American Physical Society},
  doi = {10.1103/PhysRevLett.85.2208},
  url = {https://link.aps.org/doi/10.1103/PhysRevLett.85.2208}
}

@article{morgado2021quantum,
  title={Quantum simulation and computing with Rydberg-interacting qubits},
  journal={AVS Quantum Science},
  volume={3},
  number={2},
  year={2021},
  doi={doi.org/10.1116/5.0036562},
  publisher={AIP Publishing}
}

@misc{jandura2026,
      title={Surface Code Stabilizer Measurements for Rydberg Atoms}, 
      author={Sven Jandura and Laura Pecorari and Guido Pupillo},
      year={2026},
      eprint={2405.16621},
      archivePrefix={arXiv},
      primaryClass={quant-ph},
      url={https://arxiv.org/abs/2405.16621}, 
}

@unpublished{fumiyoshi_forthcomin2026,
  title     = {Performance Evaluation of Quantum Error Correction Schemes in Dual Ytterbium Systems},
  author    = {Kobayashi, Fumiyoshi and Kusano, Toshi and Fazio, Nicholas and Nakamura, Yuma},
  note      = {in preparation},
}

@article{scholl2023erasure,
  author    = {Scholl, Pascal and Shaw, Adam L. and Tsai, Richard Bing-Shiun and Finkelstein, Ran and Choi, Joonhee and Endres, Manuel},
  title     = {Erasure conversion in a high-fidelity Rydberg quantum simulator},
  journal   = {Nature},
  volume    = {622},
  number    = {7982},
  pages     = {273--278},
  year      = {2023},
  doi       = {10.1038/s41586-023-06516-4},
  url       = {https://doi.org/10.1038/s41586-023-06516-4}
}

@article{debroy2025luci,
  title={LUCI in the surface code with dropouts},
  author={Debroy, Dripto M and McEwen, Matt and Gidney, Craig and Shutty, Noah and Zalcman, Adam},
  journal={Quantum},
  volume={9},
  pages={1936},
  year={2025},
  publisher={Verein zur F{\"o}rderung des Open Access Publizierens in den Quantenwissenschaften}
}

@article{auger2017fault,
  title={Fault-tolerance thresholds for the surface code with fabrication errors},
  author={Auger, James M and Anwar, Hussain and Gimeno-Segovia, Mercedes and Stace, Thomas M and Browne, Dan E},
  journal={Physical Review A},
  volume={96},
  number={4},
  pages={042316},
  year={2017},
  publisher={APS}
}

@article{poulin2005stabilizer,
  title={Stabilizer formalism for operator quantum error correction},
  author={Poulin, David},
  journal={Physical review letters},
  volume={95},
  number={23},
  pages={230504},
  year={2005},
  publisher={APS}
}

@article{PhysRevA.108.03240,
  title = {High-fidelity low-loss state detection of alkali-metal atoms in optical tweezer traps},
  author = {Chow, Matthew N. H. and Little, Bethany J. and Jau, Yuan-Yu},
  journal = {Phys. Rev. A},
  volume = {108},
  issue = {3},
  pages = {032407},
  numpages = {13},
  year = {2023},
  month = {Sep},
  publisher = {American Physical Society},
  doi = {10.1103/PhysRevA.108.032407},
  url = {https://link.aps.org/doi/10.1103/PhysRevA.108.032407}
}

@article{PRXQuantum.4.030337,
  title = {Repetitive Readout and Real-Time Control of Nuclear Spin Qubits in ${}^{171}\mathrm{Yb}$ Atoms},
  author = {Huie, William and Li, Lintao and Chen, Neville and Hu, Xiye and Jia, Zhubing and Sun, Won Kyu Calvin and Covey, Jacob P.},
  journal = {PRX Quantum},
  volume = {4},
  issue = {3},
  pages = {030337},
  numpages = {28},
  year = {2023},
  month = {Sep},
  publisher = {American Physical Society},
  doi = {10.1103/PRXQuantum.4.030337},
  url = {https://link.aps.org/doi/10.1103/PRXQuantum.4.030337}
}

@article{lawrence2004mutually,
  title={Mutually unbiased bases and trinary operator sets for N qutrits},
  author={Lawrence, Jay},
  journal={Physical Review A—Atomic, Molecular, and Optical Physics},
  volume={70},
  number={1},
  pages={012302},
  year={2004},
  publisher={APS}
}

@article{lemke1971set,
  title={Set covering by single-branch enumeration with linear-programming subproblems},
  author={Lemke, Carlton E and Salkin, Harvey M and Spielberg, Kurt},
  journal={Operations Research},
  volume={19},
  number={4},
  pages={998--1022},
  year={1971},
  publisher={INFORMS}
}

@misc{hirai2026hookssurfacecodedistancepreserving,
      title={No More Hooks in the Surface Code: Distance-Preserving Syndrome Extraction for Arbitrary Layouts at Minimum Depth}, 
      author={Yuga Hirai and Shota Ikari and Yosuke Ueno and Yasunari Suzuki},
      year={2026},
      eprint={2603.01628},
      archivePrefix={arXiv},
      primaryClass={quant-ph},
      url={https://arxiv.org/abs/2603.01628}, 
}

@article{7lm4-3bnh,
  title = {Dense packing of the surface code: Code deformation procedures and hook-error-avoiding gate scheduling},
  author = {Fujiu, Kohei and Nagayama, Shota and Nishio, Shin and Kawaguchi, Hideaki and Satoh, Takahiko},
  journal = {Phys. Rev. A},
  volume = {113},
  issue = {4},
  pages = {042412},
  numpages = {14},
  year = {2026},
  month = {Apr},
  publisher = {American Physical Society},
  doi = {10.1103/7lm4-3bnh},
  url = {https://link.aps.org/doi/10.1103/7lm4-3bnh}
}

@article{tomita2014low,
  title={Low-distance surface codes under realistic quantum noise},
  author={Tomita, Yu and Svore, Krysta M},
  journal={Physical Review A},
  volume={90},
  number={6},
  pages={062320},
  year={2014},
  publisher={APS}
}

@inproceedings{vittal2023eraser,
  title={ERASER: Towards adaptive leakage suppression for fault-tolerant quantum computing},
  author={Vittal, Suhas and Das, Poulami and Qureshi, Moinuddin},
  booktitle={Proceedings of the 56th Annual IEEE/ACM International Symposium on Microarchitecture},
  pages={509--525},
  year={2023}
}

@inproceedings{mude2025accurate,
  title={Accurate Leakage Speculation for Quantum Error Correction},
  author={Mude, Chaithanya Naik and Tannu, Swamit},
  booktitle={Proceedings of the 58th IEEE/ACM International Symposium on Microarchitecture},
  pages={579--594},
  year={2025}
}

@article{varbanov2020leakage,
  title={Leakage detection for a transmon-based surface code},
  author={Varbanov, Boris Mihailov and Battistel, Francesco and Tarasinski, Brian Michael and Ostroukh, Viacheslav Petrovych and O’Brien, Thomas Eugene and DiCarlo, Leonardo and Terhal, Barbara Maria},
  journal={npj Quantum Information},
  volume={6},
  number={1},
  pages={102},
  year={2020},
  publisher={Nature Publishing Group UK London}
}

@article{wang2026ai,
  title={AI-Enabled Decoding of Qubit Loss for Quantum Error-Correcting Codes},
  author={Wang, Yuqing and Nie, Xiaotian and Dai, Jiale and Ni, Zhongyi and Zhang, Tao and Zhai, Hui and Chen, Linghui},
  journal={arXiv preprint arXiv:2604.14269},
  year={2026}
}

@article{chen2016measuring,
  title={Measuring and suppressing quantum state leakage in a superconducting qubit},
  author={Chen, Zijun and Kelly, Julian and Quintana, Chris and Barends, Rami and Campbell, Brooks and Chen, Yu and Chiaro, Ben and Dunsworth, Andrew and Fowler, Austin G and Lucero, Erik and others},
  journal={Physical review letters},
  volume={116},
  number={2},
  pages={020501},
  year={2016},
  publisher={APS}
}
